\def\b{\mathbb}
\def\cc{\mathcal}
\def\sinr{\textrm{SINR}}
\def\snr{\textrm{SNR}}
\def\d{{\rm \ d}}
\def\dint{{\rm \ d}}
\newtheoremstyle{slplain}
  {3pt}
  {3pt}
  {\slshape}
  {}
  {\bfseries}
  {.}%
  { }
  {}
\theoremstyle{slplain}
\newtheorem{thm}{Theorem}
\newtheorem{cor}{Corollary}
\newtheorem{lem}{Lemma}
\newtheorem{pro}{Proposition}
\newtheorem{defi}{Definition}
\begin{document}

\title{Modeling, Analysis and Optimization \\of Multicast Device-to-Device Transmission}

\author{
\IEEEauthorblockA{Xingqin Lin, Rapeepat Ratasuk, Amitava Ghosh and Jeffrey G. Andrews}
\thanks{Xingqin Lin and Jeffrey G. Andrews are with Department of Electrical $\&$ Computer Engineering, The University of Texas at Austin, USA. (Email: xlin@utexas.edu, jandrews@ece.utexas.edu). Rapeepat Ratasuk and Amitava Ghosh are with Nokia Siemens Networks. (Email: {rapeepat.ratasuk, amitava.ghosh}@nsn.com). This research was supported by Nokia Siemens Networks. Date revised: \today. }
}

\maketitle

\begin{abstract}
Multicast device-to-device (D2D) transmission is important for applications like local file transfer in commercial networks and is also a required feature in public safety networks. In this paper  we propose a tractable baseline multicast D2D model, and use it to analyze important multicast metrics like the coverage probability, mean number of covered receivers and throughput. In addition, we examine how the multicast performance would be affected by certain factors like mobility and network assistance. Take the mean number of covered receivers as an example. We find that simple repetitive transmissions help but the gain quickly diminishes as the repetition time increases. Meanwhile, mobility and network assistance (i.e. allowing the network to relay the multicast signals) can help cover more receivers. We also explore how to optimize multicasting, e.g. by choosing the optimal multicast rate and the optimal number of retransmission times.
\end{abstract}

\begin{IEEEkeywords}
Multicast, network-assisted D2D, cellular networks, mobility, stochastic geometry.
\end{IEEEkeywords}


\section{Introduction} 

\subsection{Background}

Recently, there has been a surge of increased interest in supporting direct device to device (D2D)
communication; examples include LTE D2D \cite{3gppD2D} and WiFi Direct \cite{wifiDirect}. Direct D2D connectivity is mainly motivated by the trend of proximity-based services, particularly social networking applications \cite{Corson2012toward}. Also, as the technology of choice for public safety networks in e.g. USA, LTE is supposed to support direct D2D connectivity. From a technical perspective, incorporating D2D in cellular networks opens up many potential benefits for operators. For example, local D2D enables traffic offloading from the core network and communication out of network coverage. Due to the proximity, D2D user equipment (UE) may enjoy very high data rates, low delays and  improved energy efficiency \cite{Doppler2009d2d, lin2013optimal, Fodor2012Design}.


In D2D-enabled cellular networks, direct multicast transmissions, where the same packets from a UE are sent to multiple receivers, are important for scenarios such as the following.

1) \textit{Local file transfer/video streaming}: Local UEs may have common packets for nearby receivers; for example, local marketers may send the same advertising messages to people who happen to be in the neighborhood.

2) \textit{Device discovery}, referring to the process of detecting surrounding devices, is a basic function for many D2D use cases \cite{Fodor2012Design, baccelli2012design}. During device discovery, a discovering device periodically broadcasts beacons to announce its existence, while other devices periodically scan and each may respond to this message once it receives the beacon. 

3) \textit{Cluster head selection/coordination}: For out-of-coverage D2D, it is being discussed in 3GPP to have one UE act as cluster head within a group of UEs: The cluster head can help achieve local synchronization, manage radio resources and schedule transmissions. Cluster head selection normally involves multicast when potential cluster heads send out beacons to announce their roles. 

4) \textit{Group/broadcast communications}: In public safety networks providing services like police, fire and ambulance, D2D group/broadcast communications are required features \cite{3gppD2D}.

In the aforementioned scenarios, compared to communicating with each receiver separately, one direct multicast transmission reduces overhead and saves resources. However, unlike the more commonly studied unicast D2D (see e.g. \cite{yu2011resource, lin2013comprehensive} and references therein), multicast D2D has its own challenges. For example, due to the heterogeneous locations of receivers and complicated radio environment, link quality may vary significantly over receivers in each multicast cluster; thus retransmissions are often required to cover more or all the receivers, which degrades the whole point of multicast vs. unicast. In addition to receiver heterogeneity, it is the UEs rather than base stations (BS) that perform multicast; this introduces additional challenge due to the limited capability of UEs. Despite these challenges, compared to multicast in ad hoc networks, multicast D2D 
has certain conveniences; for example, it may be assisted by the cellular network infrastructure which is not available to ad hoc networks.

It is the significance and distinctive traits of multicast D2D described above that motivate our study in this paper.

\subsection{Related Work}

Multicast in cellular networks can be broadly classified into two classes: Single-rate and multi-rate \cite{richard2012multicast}. In single-rate multicast, the transmitter sends the packets to all the receivers at a common rate \cite{gopala2005throughput, liu2008dynamic, won2009multicast, xu2010adaptive}. For example, in \cite{gopala2005throughput} multicast throughput-delay tradeoff is studied in a single cell system by selecting the median throughput as the multicast rate. In \cite{liu2008dynamic}, dynamic power and subcarrier allocation is performed to adapt to the receiver with the weakest link. In contrast, receiver heterogeneity is exploited in multi-rate multicast, where different receivers in the same multicast cluster may receive packets at different rates based on e.g. the link qualities \cite{suh2005efficient, deb2008real, hou2009cooperative, han2011energy, shao2011layered}. Though being more efficient, multi-rate multicast is much more complex than single-rate multicast in terms of both analysis and implementation. 

In parallel with the academic studies, standardization effort in addressing multicast services has been/is being undertaken and mainly focuses on single-rate multicast. For example, multicast services were addressed in GSM/WCDMA and are being addressed in LTE by 3GPP; the 3GPP work item is known as multimedia broadcast and multicast service (MBMS) \cite{3gppMBMS}. Similarly, 3GPP2 addressed multicast services in CDMA2000 with the work item known as broadcast and multicast service (BCMCS) \cite{3gpp2BCMCS}.

There also exists much work on multicast in ad hoc networks  \cite{chaporkar2005wireless, li2009multicast, shakkottai2010multicast,  liu2011multicast}. For example, in \cite{chaporkar2005wireless} the tradeoff between throughput, stability, and packet loss is studied and a transmission policy is proposed to maximize throughput subject to stability and packet loss constraints.  While \cite{li2009multicast, shakkottai2010multicast} respectively study transport capacity for single hop and multihop wireless networks, \cite{liu2011multicast} tackles ad hoc multicast from the transmission capacity perspective \cite{weber2012TC}.

Unlike the aforementioned studies, there exists a small set of work on multicast in hybrid networks consisting of both ad hoc nodes and cellular infrastructure \cite{park2005enhancing, mao2008multicast}. Though receivers with good channel qualities may relay the multicast traffic to  receivers with poor link qualities using ad hoc mode in \cite{park2005enhancing}, the multicast transmitter is still the BS. In contrast, \cite{mao2008multicast} studies the multicast transport capacity of a hybrid network, and sheds light on its asymptotic growth rate in the number of network nodes. In addition to theoretical analysis, there exist works like \cite{seppala2011network, du2012compressed} which rely more on simulations to understand the performance of multicast D2D.

\subsection{Contributions and Outcomes}

The main contributions and outcomes of this paper are summarized as follows.

\subsubsection{A tractable hybrid network model}

In Section \ref{sec:model}, we introduce a tractable hybrid network model consisting of both ad hoc nodes and cellular infrastructure, which extends our previous unicast D2D model \cite{lin2013comprehensive} to capture the multicast receiver heterogeneity and retransmissions. This model is based on Poisson point processes (PPP), which are highly tractable, and in many practical scenarios, also quite accurate \cite{andrews2011tractable, lee2012stochastic}.

\subsubsection{Multicast performance analysis}

Unlike in one-shot transmission, there exists significant correlation among the signals (resp. interference) over the multicast retransmissions. By tackling this time correlation, we characterize the coverage probability at a particular receiver, and also study how they interact. Building on the coverage analysis, we derive expressions 
for the mean number of covered receivers in each multicast cluster. The expressions allow efficient numerical evaluation; some of them are even in closed-form. Further, we explore multicast throughput and use it as a metric for selecting the optimal multicast rate. These studies reveal the fundamental tradeoff between efficiency  (multicast throughput) and reliability (mean number of covered receivers).

\subsubsection{Impact of mobility}

Though in our default model multicast transmitters are static, we also explore the impact of mobility and analytically show that mobility \textit{hurts} the performance if one would like to support a target SINR for multiple successive transmissions. In contrast, interestingly, we find that mobility \textit{improves} the multicast performance in terms of either coverage probability or mean number of covered receivers or multicast throughput.

\subsubsection{Network-assisted multicast D2D}

We analyze the multicast performance by incorporating network assistance, i.e., allowing the network to relay the multicast signals. It is shown that network assistance can significantly enhance the multicast performance compared to the case of no network assistance. In addition, we formulate a network-assisted multicast D2D optimization problem which minimizes the number of retransmission times subject to a resource constraint at the BSs and a multicast reliability constraint.
An efficient algorithm is also proposed.

\section{System Model}
\label{sec:model}

In this section, we propose a tractable baseline model for studying multicast D2D transmissions.

\subsection{Distributions of Network Nodes}

We consider a hybrid network consisting of both cellular and D2D links. The positions of BSs form an independent Poisson point process (PPP) $\Phi_{\textrm{b}}=\sum_i \delta_{z_i}$ with intensity $\lambda_b$; here $\delta_{z}$ denotes the Dirac measure at position $z \in \mathbb{R}^2$, i.e., for any measurable set $A \subset \mathbb{R}^2$, $\epsilon_{z} (A) = 1$ if $z \in A$, and $0$ otherwise. The PPP model for
BS locations has been recently shown to be about as accurate in terms of both SINR distribution and handover rate as the hexagonal grid for a representative urban cellular network \cite{andrews2011tractable, lin2012towards}.  With a slight abuse of notation, we will also use the position $z$ to indicate the node located at $z$. Similarly, the positions of multicast D2D transmitters form an independent  PPP  $\Phi_{\textrm{m}} = \sum_i \delta_{x_i}$ with intensity $\lambda_m$. These assumptions follow our previous unicast D2D model \cite{lin2013comprehensive}. We further assume that for each D2D transmitter $x_i$, the positions of its intended receivers form a point process $\Phi_{\textrm{m}, x_i} =  \sum_i \delta_{y_i} $ with intensity measure $\Lambda_{x_i} (\cdot) = \lambda_{r} \nu( \cdot \cap B(x_i, R) )$, where $\nu( \cdot )$ is Lebesgue measure in $\mathbb{R}^2$ and $B(x, R)$ denotes the ball centered at $x$ with radius $R$. Note that we do not assume any specific distribution for the receiver point process $\Phi_{\textrm{m}, x_i}$ except the first-order intensity measure; in particular, $\Phi_{\textrm{m}, x_i}$ does not have to be Poisson distributed.

Conditioning on $\Phi_m$, $\{\Phi_{\textrm{m}, x_i}\}$ are assumed to be independent. Those familiar with stochastic geometry will immediately recognize that $\{\Phi_{\textrm{m}, x_i}\}$, which are in the space of point processes on $\mathbb{R}^2$, are independent marks of the PPP $\Phi_{\textrm{m}}$ \cite{baccelli2009stochastic}. Fig. \ref{fig:1} illustrates a snapshot of the spatial distribution of network nodes under the above assumptions. Throughput this paper, the  parameters used in plotting numerical results or simulations are summarized in Table \ref{tab:sys:para} unless otherwise specified.

\begin{figure}
\centering
\includegraphics[width=8cm]{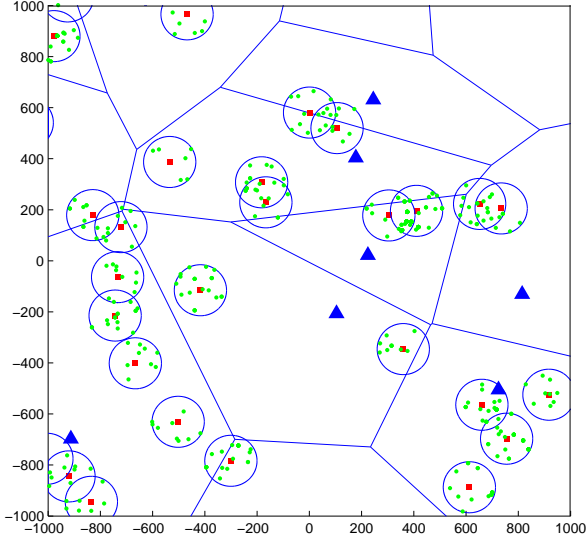}
\caption{A sample realization of the network nodes: Blue solid triangles, red solid squares and green dots denote BSs, multicast D2D transmitters and receivers, respectively. }
\label{fig:1}
\end{figure}

\begin{table}
\centering
\begin{tabular}{|l||r|} \hline
BS Density $\lambda_b$   & $(\pi 500^2)^{-1}$  \\ \hline 
D2D Tx Density $\lambda_m$   & $5 \times (\pi 500^2)^{-1}$  \\ \hline 
D2D Rx Density $\lambda$   & $500 \times (\pi 500^2)^{-1}$  \\ \hline 
Path Loss Exponent $\alpha$ & $3.5$ \\ \hline
Detection Threshold $T$ & $-3$dB \\ \hline  
BS Tx Power $P_c$ & $40$W \\ \hline
D2D Tx Power $P_m$ & $200$mW \\ \hline
Noise PSD & $-174$dBm \\ \hline 
Noise Figure & $9$dB \\ \hline 
Channel Bandwidth & $10$MHz \\ \hline 
\end{tabular}
\caption{Simulation/Numerical Parameters}
\label{tab:sys:para}
\end{table}


\subsection{Multicast Transmission}

Each D2D transmitter $x_i$ has a common message for all the intended receivers in $\Phi_{\textrm{m}, x_i}$; the message can be sent for $\tau_m \in \mathbb{N}$ times, where $\tau_m$ is a pre-configured system parameter. Compared to one shot transmission, sending the message $\tau_m >1$ times enables more intended receivers to successfully decode the message. Further, we assume that multicast transmitters are static during the $\tau_m$ transmissions. As link adaptation is often not possible in multicast transmission \cite{richard2012multicast}, we focus on fixed rate multicast transmission; the rate is often chosen to adapt to receivers of worst or median channel qualities \cite{liu2008dynamic, gopala2005throughput}.

When D2D UEs are in coverage, the ground cellular network can assist D2D communications. Specifically, each in-coverage multicast D2D transmitter has a serving BS; normally the serving BS is the BS providing the strongest reference signal receiving power (RSRP). In the current set-up, this is equivalent to choosing the nearest BS as the serving BS. We use $z_x \in \Phi_{\textrm{b}}$ to indicate the nearest BS of D2D transmitter $x$. Formally,
define the Voronoi cell $\mathcal{C}_{z_i} (\Phi_{\textrm{b}})$ of point $z_i$ with respect to $\Phi_{\textrm{b}}$ as
\begin{align}
\mathcal{C}_{z_i} (\Phi_{\textrm{b}}) = \{ x \in \mathbb{R}^2:  \parallel x - z_i \parallel \ \leq \  \parallel x - z_j \parallel, \forall z_j \in \Phi_{\textrm{b}}  \} .   \notag 
\end{align}
Then each BS $z$ can help D2D transmitters located in its Voronoi cell $\mathcal{C}_{z_i}(\Phi_{\textrm{b}})$ by broadcasting the common messages. Considering the limited time/frequency resource at the BSs, the message of each D2D transmitter $x$ is broadcast by BS $z_x$ at most once.

We assume that D2D is \textit{overlaid} with cellular networks, i.e., D2D transmitters and BSs use orthogonal transmission resources, and thus there is no mutual interference between cellular and D2D transmissions. We refer to \cite{lin2013comprehensive} for \textit{underlay} D2D study, in which D2D and cellular transmissions can be cochannel.
In addition, we assume the multicast message of each D2D transmitter is known by its serving BS. As will become clear from our analysis, the last assumption can be easily relaxed by incorporating the additional hop from the D2D transmitter to its serving BS into the analysis.

\subsection{Channel Model}

Constant transmit powers $P_b$ and $P_m$ are assumed for the BSs and D2D transmitters, respectively. Denote the path loss function as $\ell(r): \mathbb{R}^{+}  \mapsto \mathbb{R}^+$, which is assumed to be continuous and non-decreasing. When concrete results are desired,  we will assume a specific path loss function $\ell(r) = A r^{\alpha}$   where $A>0$ is a constant and  $\alpha >2$ is the path-loss exponent.

Focusing on the signal emitted by the typical transmitter $x_0$ located at the origin, i.e., $x_0 = o$, the received signal $Y_y (n)$ at the receiver $y \in \Phi_{\textrm{m}, o}$ at time $n$ can be written as
\begin{align}
Y_y (n) &= \ell^{-1}(\|y\|) \sqrt{P_m} H_{y,o} (n) X_{o}  + I_y (n) + Z_{y}(n),  \notag 
\end{align}
where $X_{x}$ denotes the signal sent by the D2D transmitter $x$ and $\mathbb{E}[ \|X_{x}\|^2 ] = 1$, $H_{y,x}(n)$ denotes the fading of the link from $x$ to $y$ at time $n$ and is independently distributed as $\mathcal{CN}(0,1)$, $Z_{y}(n)$ denotes the additive noise at receiver $y$ at time $n$ and is independently distributed as $\mathcal{CN}(0,\sigma^2)$, and $I_y (n)$ denotes the aggregate interference  at receiver $y$ at time $n$ and is given by
$
I_y (n) = \sum_{x \neq o} \ell^{-1}(\|y-x\|) \sqrt{P_m} H_{y,x} X_{x}.  
$
Then the signal-to-interference-plus-noise ratio (SINR) of the link from the typical D2D transmitter $x_0=o$ to D2D receiver $y$ at time $n$ equals
$$
\sinr_{y,x_0} (n) = \frac{  F_{y,x_0} (n) / \ell ( \|  y  \| ) }{ \snr^{-1}  +\sum_{j \neq 0} F_{y,x_j} (n) / \ell ( \| x_j - y   \| ) },
$$
where $F_{y,x} = |H_{y,x}|^2 \sim \textrm{Exp}(1)$, and $\snr^{-1} = \sigma^2/P_m$.

Similarly, the received downlink signal $Y_y^{(c)} $ at the receiver $y \in \Phi_{\textrm{m}, o}$ can be written as
\begin{align}
Y_y^{(c)} &= \ell^{-1}(\|z_o\|) \sqrt{P_b}  H_{y,z_o} X_{o}  + I^{(c)}_y + Z_{y},  \notag 
\end{align}
where the aggregate downlink interference $I^{(c)}_y (n)= \sum_{x \neq o} \ell^{-1}(\|y-z_x\|) \sqrt{P_b} H_{y,z_x} X_{x}.  $
The SINR of the link from the nearest BS $z_o$ of the typical D2D transmitter $x_0$ to D2D receiver $y$ equals
$$
\sinr^{(c)}_{y,z_o} = \frac{  F_{y,z_o} / \ell ( \| z_o -  y  \| ) }{ \snr_c^{-1}  +\sum_{x \neq o} F_{y,z_x} / \ell ( \| z_x - y   \| ) },
$$
where $\snr_c^{-1} = \sigma^2/P_b$.

\subsection{Performance Metrics}

From the perspective of analysis, it suffices to consider the typical multicast cluster with $x_0=o$ since, as justified by Palm theory \cite{baccelli2003elements}, its performance indicates the \textit{spatially averaged} performance over all the clusters. Focusing on the typical cluster, we are first interested in the probability that an arbitrary receiver $y \in \Phi_{m,o}$ can decode the multicast message of the typical D2D transmitter $x_0$; we term this coverage probability. Without network assistance, we say the receiver $y \in \Phi_{m,o}$ is covered if $\exists n \in \{1,2,...,\tau_m\}$ such that
$
\sinr_{y,x_0} (n) \geq T,
$
where $T$ is the detection threshold of the fixed rate multicast transmission and is normally greater than $-6$ dB in LTE. 
Formally, denoting $E_n (y)=\{\sinr_{y,x_o} (n) \geq T\}$, the coverage probability at $y$ without network assistance equals
\begin{align}
p(y) \triangleq \mathbb{P}^o \left( \cup_{n =1}^{\tau_m} E_n (y) \right),
\end{align}
where $\mathbb{P}^o(\cdot)$ is  the Palm probability associated with the multicast transmitter process $\Phi_m$. For later use, we define $p_n (y) \triangleq \mathbb{P}^o \left( \cap_{m =1}^{n} E_m (y) \right)$.

Similarly, with network assistance, we say the receiver $y \in \Phi_{m,o}$ is covered if either $\exists n \in \{1,2,...,\tau_m\}$ such that
$
\sinr_{y,x_0} (n) \geq T
$
or $\sinr^{(c)}_{y,z_o} \geq T$. Formally, denoting $E^{(c)} (y)=\sinr^{(c)}_{y,z_o} \geq T$, the coverage probability at $y$ with network assistance equals
\begin{align}
\tilde{p}(y) \triangleq \mathbb{P}^o \left( \cup_{n =1}^{\tau_m} E_n(y) \cup E^{(c)}(y) \right).
\end{align}

While coverage probability characterizes the performance of an individual receiver in the typical cluster, it is also desirable to have a metric to measure the performance of the typical cluster as a whole. Thus, another metric studied in this paper is the mean number of covered receivers in the typical cluster. When network assistance is not available, it equals
\begin{align}
\mathbb{E}^o[ N ] \triangleq  \mathbb{E}^o [ \sum_{y \in \Phi_{m,o}} \mathbb{I} (\{ y \textrm{ is covered} \}) ],
\end{align}
where $\mathbb{I}(E)$ is the indicator function which equals $1$ if the event $E$ is true and 0 otherwise. We use $\mathbb{E}^o[ \tilde{N} ]$ to denote the counterpart of $\mathbb{E}^o[ N ]$ in the case of network assistance.

\section{Multicast without Network Assistance}

In this section we focus on analyzing the multicast performance without network assistance.

\subsection{Coverage Probability}

By definition of Palm probability, the coverage probability at $y$ without network assistance equals
\begin{align}
&p (y  ) = \mathbb{E}^o [ \mathbb{I}(\{ \max_{n=1,...,\tau_m} \sinr_{y,x_0} (n) \geq T   \} ) ] = \notag \\
& \frac{1}{\lambda_m |B|} \mathbb{E} \left[ \int_{x \in B} \mathbb{I}(\{ \max_{n=1,...,\tau_m} \sinr_{y+x,x} (n) \geq T   \} )  \Phi_m(\textrm{d} x) \right], \notag 
\end{align}
where $B$ is an arbitrarily bounded subset of $\mathbb{R}^2$ and $|B|$ denotes its Lebesgue measure. The last relation clearly demonstrates that the coverage performance of the typical cluster indicates the average coverage performance over the clusters. The coverage probability $p (y)$ is explicitly given in Theorem \ref{thm:1}.

\begin{thm}
The probability that the receiver $y \in \Phi_{\textrm{m}, 0}$ is covered by the typical multicast transmitter $x_0 \in \Phi_m$ with $\tau_m$ repetitive transmissions is given by
\begin{align}
p (y  ) 
= \sum_{n=1}^{\tau_m} &(-1)^{n+1} \binom{\tau_m}{n} e^{-n \ell ( \|  y  \| ) T \cdot \snr^{-1} } \cdot \notag \\
&e^{ - 2\pi \lambda_m \int_{0}^\infty ( 1 -   (1 + T\ell ( \|  y  \| )  / \ell ( r ) )^{-n}  ) r \dint r } . 
\label{eq:1}
\end{align}
\label{thm:1}
\end{thm}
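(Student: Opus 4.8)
\emph{Proof strategy.} The plan is two-fold: first I would reduce the probability of the union $\bigcup_{n=1}^{\tau_m}E_n(y)$ to the joint probabilities $p_n(y)$, and then evaluate each $p_n(y)$ by a Palm/PGFL calculation on the PPP $\Phi_m$. For the reduction, the key observation is that, conditioned on the transmitter configuration $\Phi_m$, the events $E_1(y),\dots,E_{\tau_m}(y)$ are independent and identically distributed: each $E_m(y)$ is a function of the frozen locations together with the slot-$m$ fading family $\{F_{y,x}(m)\}_x$, and these families are i.i.d.\ across $m$. Hence $\mathbb{P}^o\big(\bigcup_{n=1}^{\tau_m}E_n(y)\,\big|\,\Phi_m\big)=1-(1-q)^{\tau_m}$ with $q=\mathbb{P}^o\big(E_1(y)\,\big|\,\Phi_m\big)$, and expanding binomially and taking the expectation over $\Phi_m$ gives
\begin{align}
p(y)=\mathbb{E}^o\big[\,1-(1-q)^{\tau_m}\,\big]=\sum_{n=1}^{\tau_m}(-1)^{n+1}\binom{\tau_m}{n}\,p_n(y), \notag
\end{align}
since $\mathbb{E}^o[\,q^n\,]=\mathbb{P}^o\big(\bigcap_{m=1}^n E_m(y)\big)=p_n(y)$. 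Observe that the $E_n(y)$ are \emph{not} unconditionally independent, because the interferer locations are common to all slots; this is precisely the time correlation that must be carried through.

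\emph{Evaluating $p_n(y)$.} By Slivnyak's theorem, under $\mathbb{P}^o$ the interferers still form a PPP of intensity $\lambda_m$ alongside the typical transmitter at the origin. I would then condition on this PPP and on the interference powers $I_y(1),\dots,I_y(n)$; since the desired-link gains $F_{y,x_0}(1),\dots,F_{y,x_0}(n)$ are i.i.d.\ $\textrm{Exp}(1)$, the conditional probability of $\bigcap_{m=1}^n E_m(y)$ equals $\prod_{m=1}^n e^{-T\ell(\|y\|)(\snr^{-1}+I_y(m))}$. Averaging over the interference gains $\{F_{y,x_j}(m)\}$, which are i.i.d.\ $\textrm{Exp}(1)$ in both $j$ and $m$, turns this into $e^{-n\,\ell(\|y\|)T\snr^{-1}}\prod_{x_j\neq o}\big(1+T\ell(\|y\|)/\ell(\|x_j-y\|)\big)^{-n}$; the exponent $n$ on each factor is where the correlation through $\Phi_m$ surfaces. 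Finally, taking the expectation over $\Phi_m$ via the probability generating functional of the PPP, shifting the integration variable by $y$ (translation invariance of Lebesgue measure), and passing to polar coordinates produces exactly the $n$-th summand in \eqref{eq:1}.

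\emph{Main difficulty.} There is no deep obstacle here; the point that needs care is the conditional i.i.d.\ structure used to collapse the inclusion--exclusion expansion — in particular, distinguishing the \emph{shared} interferer locations (which make each factor appear with power $n$, rather than giving $p_1(y)^n$) from the \emph{renewed} per-slot fading (which makes $\mathbb{P}^o(\bigcap_{m\in S}E_m(y))$ depend on $S$ only through $|S|$). It is also worth recording that, for $\ell(r)=Ar^{\alpha}$, the integral in \eqref{eq:1} converges precisely because $\alpha>2$: the integrand is $O(r^{1-\alpha})$ as $r\to\infty$, which is integrable if and only if $\alpha>2$.
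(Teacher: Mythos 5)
Your proposal is correct and follows essentially the same route as the paper's proof: condition on $\Phi_m$, exploit the conditional independence of the per-slot fading to collapse the union (your $1-(1-q)^{\tau_m}$ binomial expansion is equivalent to the paper's inclusion--exclusion step, since conditionally the events are i.i.d.), obtain the power-$n$ factors from the shared interferer locations via the $\textrm{Exp}(1)$ Laplace transform, and then de-condition using Slivnyak and the PPP probability generating functional before passing to polar coordinates. The remark on integrability for $\alpha>2$ is a correct, if minor, addition not spelled out in the paper.
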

\begin{proof}
See Appendix \ref{proof:thm:1}.
\end{proof}

Note that, when $\tau_m$ is large, exact calculation of $p (y)$ based on (\ref{eq:1}) may be cumbersome. Instead, one may consider the following bounds of $p (y)$ which follow from Bonferroni inequalities  (see e.g. \cite{durrett2010probability}):
\begin{align}
p^{(k+1)} (y) \leq p (y  ) \leq p^{(k)} (y), \notag
\end{align}
where $k$ is any odd number in $\{1,...,\tau_m\}$ and $p^{(k)} (y)$ equals the first $k$ summands of the $\tau_m$ summands in (\ref{eq:1}).
By definition, $p(y) = p^{(\tau_m)} (y)$. In general, one gets tighter bounds by making $k$ larger; $p_{k} (y)$ reduces to the union bound when $k=1$. 

Based on Theorem \ref{thm:1}, more specific results can be obtained by plugging explicit path-loss functions $\ell (r)$ in (\ref{eq:1}). For example, for the commonly used path-loss function $\ell (r) = A r^{\alpha}$, the following result immediately follows from Theorem \ref{thm:1}.
\begin{cor}
With $\ell (r) = A r^{\alpha}$, $p (y)$ equals
\begin{align}
\sum_{n=1}^{\tau_m} (-1)^{n+1} \binom{\tau_m}{n} e^{-  n T \cdot \snr^{-1} A \|  y  \|^{\alpha} }  e^{ - \lambda_m K(\alpha, n) T^{\frac{2}{\alpha}} \| y \|^2  }, 
\label{eq:3}
\end{align}
where 
\begin{align}
K(\alpha, n) = \frac{2\pi }{\alpha}\int_0^{\infty} t^{-\frac{2}{\alpha} - 1} \left( 1 - \frac{1}{(1+t)^n} \right) \dint t.
\end{align}
\label{cor:1}
\end{cor}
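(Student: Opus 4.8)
\emph{Proof proposal.} The plan is to obtain Corollary~\ref{cor:1} directly from Theorem~\ref{thm:1} by substituting $\ell(r)=Ar^\alpha$ into (\ref{eq:1}) and then performing a single change of variables in the remaining integral. First, since $\ell(\|y\|)=A\|y\|^\alpha$, the SNR-dependent factor $e^{-n\ell(\|y\|)T\cdot\snr^{-1}}$ in (\ref{eq:1}) becomes $e^{-nT\cdot\snr^{-1}A\|y\|^\alpha}$ with no further work, matching the first exponential in (\ref{eq:3}). Likewise, inside the second exponential the ratio $T\ell(\|y\|)/\ell(r)$ collapses to $T\|y\|^\alpha/r^\alpha$, so it remains only to verify the identity
\begin{align}
2\pi\lambda_m\int_0^\infty\!\Big(1-\big(1+T\|y\|^\alpha/r^\alpha\big)^{-n}\Big)\,r\dint r
\;=\;\lambda_m K(\alpha,n)\,T^{\frac{2}{\alpha}}\|y\|^2. \notag
\end{align}

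For this I would substitute $t=T\|y\|^\alpha r^{-\alpha}$, under which $r\in(0,\infty)$ maps to $t\in(\infty,0)$, the term $1+T\|y\|^\alpha/r^\alpha$ becomes $1+t$, and a short computation gives $r\dint r=-\tfrac{1}{\alpha}T^{2/\alpha}\|y\|^2\,t^{-2/\alpha-1}\dint t$. Pulling the constant $\tfrac{1}{\alpha}T^{2/\alpha}\|y\|^2$ out of the integral and flipping the limits leaves exactly $\int_0^\infty t^{-2/\alpha-1}\big(1-(1+t)^{-n}\big)\dint t$, which by the definition of $K(\alpha,n)$ equals $\tfrac{\alpha}{2\pi}K(\alpha,n)$; multiplying by the prefactor $2\pi\lambda_m$ reproduces the right-hand side above, and reassembling the summands yields (\ref{eq:3}).

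The computation is routine, so there is no real obstacle; the only point deserving a word of care — and the only place the standing assumption $\alpha>2$ enters — is the finiteness of $K(\alpha,n)$. As $t\to0$ one has $1-(1+t)^{-n}\sim nt$, so the integrand behaves like $n\,t^{-2/\alpha}$, which is integrable near the origin precisely because $2/\alpha<1$; as $t\to\infty$ the factor $1-(1+t)^{-n}\to1$ and the integrand decays like $t^{-2/\alpha-1}$, which is always integrable at infinity. I would include this one-line check and otherwise present the change of variables.
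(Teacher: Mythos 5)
Your proposal is correct and follows the same route the paper intends: Corollary~\ref{cor:1} is obtained by substituting $\ell(r)=Ar^{\alpha}$ into (\ref{eq:1}) and reducing the radial integral via the change of variables $t=T\|y\|^{\alpha}r^{-\alpha}$, which is exactly the computation behind the paper's ``immediately follows'' claim. Your added one-line check that $K(\alpha,n)$ is finite (using $\alpha>2$) is a harmless and sensible inclusion.
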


Fig. \ref{fig:2} shows the coverage probability as a function of detection threshold. As expected, the farther the potential receiver away from the multicast transmitter, the smaller the coverage probability is. Further, repetitive transmissions are instrumental in improving the coverage probability, especially for far away receivers. But the gain diminishes as $\tau_m$ increases.

\begin{figure}
\centering
\includegraphics[width=8cm]{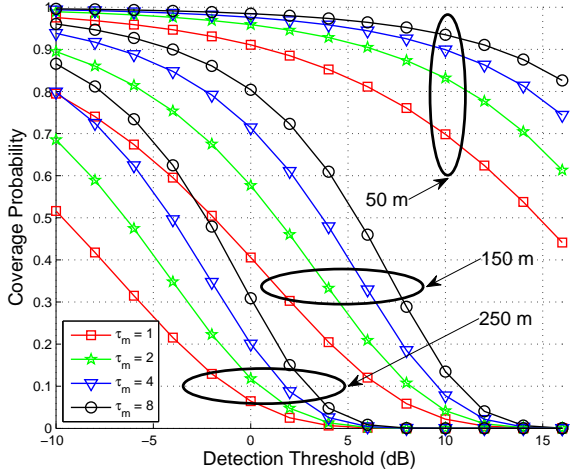}
\caption{Coverage probability versus detection threshold without network assistance: The numbers, $50$ m, $150$ m, $250$ m, indicate three different D2D Tx-Rx distances.}
\label{fig:2}
\end{figure}

Theorem \ref{thm:1} (resp. Corollary \ref{cor:1}) characterizes the coverage probability at a particular receiver, which can be treated as first order coverage performance. As highlighted in \cite{ganti2009spatial, lin2012modeling}, there exist temporal and spatial correlations in the performance at different nodes in a wireless network. Thus, it is of interest to study how the coverage probabilities of different receivers interact, i.e., higher order coverage performance. 
Intuitively,  if some receiver is covered, we may infer that other receivers close to the receiver are also likely to be covered. Towards exploring the correlations of coverage probabilities at different locations, we define the typical covered receiver process
\begin{align}
\tilde{\Phi}_{m,o} = \sum_{y \in \Phi_{m,o}} e_{y} \mathbb{I}(\{ y \textrm{ is covered}\}), \notag 
\end{align}
Obviously, $\tilde{\Phi}_{m,o}$ is a thinning process ``thinned'' from the PPP $\Phi_{m,o}$. However, the thinning operations are \textit{not} independent across the points in $\Phi_{m,o}$ because they are correlated through the multicast transmitter process $\Phi_m$, i.e., due to the presence of common randomness in the locations of the multicast transmitters. This dependent thinning makes the thinning process $\Phi_{m,0}^{(c)}$ no longer a PPP. The following proposition formalizes the correlation concept. 

\begin{pro}
Conditioning on that $y_2$ is covered, the probability that $y_1$ is covered equals
\begin{align}
p (y_1 | y_2 ) &= \sum_{n=1}^{\tau_m} (-1)^{n+1} \binom{\tau_m}{n} p_n (y_1 | y_2   ),
\label{eq:03}
\end{align}
where $p_n (y_1 | y_2)  \triangleq  \mathbb{P}^o(\cap_{m=1}^n E_m(y_1) | \cap_{m=1}^n E_m(y_2) )$ is given by
\begin{align}
&p_n (y_1 | y_2) 
= e^{-n \ell ( \|  y_1  \| ) T \cdot \snr^{-1} } \cdot \notag \\
&e^{ - \lambda_m \int_{\b R^2} \frac{1}{ (1 + \ell ( \|  y_2  \| )  T  / \ell ( \| x - y_2   \| ) )^n } \left(1 -  \frac{1}{ (1 + \ell ( \|  y_1  \| )  T  / \ell ( \| x - y_1   \| ) )^n } \right) \dint x }. \notag
\end{align}
\label{pro:1}
\end{pro}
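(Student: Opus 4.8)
The plan is to follow the same route as the proof of Theorem~\ref{thm:1}, but now under the law conditioned on $\{y_2 \textrm{ covered}\}$, and to reduce the whole computation to one probability generating functional (PGFL) evaluation for the PPP $\Phi_m$. First I would expand the conditional coverage event of $y_1$ by inclusion--exclusion. Writing $\{y_1 \textrm{ covered}\}=\cup_{n=1}^{\tau_m}E_n(y_1)$ and keeping $\{y_2 \textrm{ covered}\}$ as the conditioning event throughout,
\begin{align}
p(y_1 | y_2)=\sum_{n=1}^{\tau_m}(-1)^{n+1}\!\!\sum_{|S|=n}\!\mathbb{P}^o\!\Big(\bigcap_{m\in S}E_m(y_1)\ \Big|\ \{y_2 \textrm{ covered}\}\Big). \notag
\end{align}
Because the $\tau_m$ transmissions are statistically interchangeable --- the fades $H_{\cdot,\cdot}(n)$ are i.i.d.\ across $n$ while the transmitter positions stay frozen over the $\tau_m$ slots --- each inner term depends on $S$ only through $|S|=n$ and equals $p_n(y_1 | y_2)=\mathbb{P}^o(\cap_{m=1}^{n}E_m(y_1) | \cap_{m=1}^{n}E_m(y_2))$; collecting the $\binom{\tau_m}{n}$ equal terms then yields (\ref{eq:03}), so the remaining task is to evaluate $p_n(y_1 | y_2)$.

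For that, I would write $p_n(y_1 | y_2)$ as the ratio of the joint probability $\mathbb{P}^o(\cap_{m=1}^{n}E_m(y_1)\cap\cap_{m=1}^{n}E_m(y_2))$ to the marginal $\mathbb{P}^o(\cap_{m=1}^{n}E_m(y_2))=p_n(y_2)$, the latter already being computed in the proof of Theorem~\ref{thm:1}. For the numerator I would condition on the transmitter process $\Phi_m$: given $\Phi_m$ the two blocks are independent, since $\cap_m E_m(y_1)$ is a function only of $\{F_{y_1,x}(m)\}$ and $\cap_m E_m(y_2)$ only of $\{F_{y_2,x}(m)\}$, which are independent families. Using that the desired-signal fades $F_{y_i,x_0}(m)$ are $\textrm{Exp}(1)$ and conditioning further on the interference, the Rayleigh-fading step applied slot by slot gives
\begin{align}
\mathbb{P}^o\!\Big(\bigcap_{m=1}^{n}E_m(y_i)\ \Big|\ \Phi_m\Big)=e^{-n\ell(\|y_i\|)T\snr^{-1}}\!\!\prod_{x\in\Phi_m}\!\Big(1+\tfrac{T\ell(\|y_i\|)}{\ell(\|x-y_i\|)}\Big)^{-n}, \notag
\end{align}
the power $-n$ appearing precisely because the \emph{same} interferer configuration $\Phi_m$ acts in all $n$ slots while only the fades are renewed. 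Multiplying the $i=1$ and $i=2$ versions and de-conditioning with the PGFL of $\Phi_m$ turns the product over interferers into $\exp(-\lambda_m\int_{\mathbb{R}^2}(1-\prod_{i=1,2}(1+T\ell(\|y_i\|)/\ell(\|x-y_i\|))^{-n})\,\dint x)$ times the two noise factors; dividing this by $p_n(y_2)$, the $y_2$-only noise factor and the $y_2$-only exponent $\lambda_m\int_{\mathbb{R}^2}(1-(1+T\ell(\|y_2\|)/\ell(\|x-y_2\|))^{-n})\,\dint x$ cancel, leaving $\int_{\mathbb{R}^2}\beta_2(x)^n(1-\beta_1(x)^n)\,\dint x$ in the exponent with $\beta_i(x)=(1+T\ell(\|y_i\|)/\ell(\|x-y_i\|))^{-1}$, which is exactly the claimed formula.

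The hardest part will be the bookkeeping of the temporal correlation together with the two-receiver coupling: one must be careful that, after taking the per-slot fading expectations, $\Phi_m$ enters the functional through a single $n$th power (not through $n$ independent thinnings), and that treating $y_1$ and $y_2$ jointly couples them only through the common $\Phi_m$, which is what produces the product $\prod_{i=1,2}$ inside the integral and hence the kernel $\beta_2(x)^n(1-\beta_1(x)^n)$ after cancellation against $p_n(y_2)$. Once this setup is correct, the PGFL evaluation, the ratio, and --- if one wants the $\ell(r)=Ar^\alpha$ specialization --- the passage to polar coordinates are all routine.
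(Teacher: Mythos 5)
Your evaluation of $p_n(y_1\mid y_2)$ is correct and is essentially the paper's own argument: condition on $\Phi_m$, use that the fading families seen at $y_1$ and at $y_2$ are independent given $\Phi_m$, apply the per-slot Rayleigh step so that the frozen interferer configuration enters through a single $n$-th power, de-condition with the PGFL of $\Phi_m$ to get the joint $p_n(y_1,y_2)$, and divide by $p_n(y_2)$; the cancellation leaves exactly the kernel $\beta_2(x)^n\bigl(1-\beta_1(x)^n\bigr)$ you describe. That part needs no changes.

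The gap is in your first step, the derivation of the display (\ref{eq:03}). After inclusion--exclusion, your terms are $\mathbb{P}^o\bigl(\cap_{m\in S}E_m(y_1)\mid \cup_{m=1}^{\tau_m}E_m(y_2)\bigr)$. Exchangeability of the slots (i.i.d.\ fading, frozen $\Phi_m$) does imply that each term depends on $S$ only through $|S|=n$, but the conditioning event remains the union $\cup_{m=1}^{\tau_m}E_m(y_2)$; it does not turn into $\cap_{m=1}^{n}E_m(y_2)$. Identifying these terms with $p_n(y_1\mid y_2)=\mathbb{P}^o\bigl(\cap_{m=1}^{n}E_m(y_1)\mid\cap_{m=1}^{n}E_m(y_2)\bigr)$ equates conditional probabilities taken with respect to different conditioning events, and no symmetry argument licenses that. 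Equivalently, $\sum_{n}(-1)^{n+1}\binom{\tau_m}{n}\,p_n(y_1,y_2)/p_n(y_2)$ is an alternating sum of ratios with $n$-dependent denominators and cannot be reassembled into the single ratio $\mathbb{P}^o(\{y_1\textrm{ cov.}\}\cap\{y_2\textrm{ cov.}\})/\mathbb{P}^o(\{y_2\textrm{ cov.}\})$; inclusion--exclusion with a common conditioning event would instead produce $\mathbb{P}^o\bigl(\cap_{m\in S}E_m(y_1)\mid \cup_m E_m(y_2)\bigr)$ throughout, which is a different quantity from $p_n(y_1\mid y_2)$. Be aware that the paper's own appendix does not establish (\ref{eq:03}) either: it only computes $p_n(y_1\mid y_2)$ as the ratio $p_n(y_1,y_2)/p_n(y_2)$, and the first display is asserted without proof. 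So the substantive formula you prove is fine, but the step in which you claim to deduce (\ref{eq:03}) from exchangeability is the one that fails, and it cannot be repaired by the argument you sketch.
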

\begin{proof}
See Appendix \ref{proof:pro:1}.
\end{proof}
Prop. \ref{pro:1} formally shows that the multicast coverage probabilities are indeed correlated across space. Let us examine the summand $p_n (y_1 | y_2)$ in (\ref{eq:03}) to obtain some insight. By definition, it is the conditional probability of the event $\{\min_{k=1,...,n} \sinr_{y_1,x_0} (k)  \geq T\}$ conditional on the event $\{\min_{k=1,...,n} \sinr_{y_2,x_0} (k)  \geq T\}$. Further, direct calculation yields that 
\begin{align}
\frac{p_n (y_1 | y_2)}{p_n (y_1)} = e^{  \lambda_m \int_{0}^{\infty} \prod_{i=1}^2 \left(1 -  \frac{1}{ (1 + \ell ( \|  y_i  \| )  T  / \ell ( \|  x - y_i   \| ) )^n } \right)   \dint x } > 1, \notag
\end{align}
which agrees with intuition: Given the event $\{\min_{k=1,...,n} \sinr_{y_2,x_0} (k)  \geq T\}$, there is a
higher probability that the event $\{\min_{k=1,...,n} \sinr_{y_1,x_0} (k)  \geq T\}$ would happen. The following more specific remarks are in order:
\begin{itemize}
\item The correlation becomes weaker as $\lambda_m$ decreases; in particular, when $\lambda_m$ is asymptotically small, the correlation may be ignored.
\item The correlation becomes weaker as $n$ decreases (which leads to reduced temporal correlation).
\item The correlation becomes stronger when $\| y_1 - y_2  \|$ decreases. In particular, 
$
\lim_{y_2 \to y_1} \frac{p_n (y_1 | y_2)}{p_n (y_1)} = \frac{1}{p_n (y_1)}
$.
\item The correlation becomes stronger as $T$ increases. This is intuitive because with higher $T$ a larger number of interfering nodes come into play. In contrast, when $T$ is small, the outage events at $y_1$ and $y_2$ are respectively dominated by a few nearby interferers around them, and the intersection of the two sets of nearby interferers can be quite small, leading to weak spatial correlation.
\end{itemize}

\subsection{Mean Number of Covered Receivers}

In this subsection we study the mean number of covered receivers in the typical cluster. For concreteness, we focus on the path-loss function  $\ell (r) = A r^{\alpha}$ in the sequel.
\begin{pro}
With $\ell (r) = A r^{\alpha}$, the mean number of covered receivers in the typical cluster is given by
\begin{align}
\mathbb{E}^o [ N ] =& 2\pi  \lambda_r  \sum_{n=1}^{\tau_m}  (-1)^{n+1} \binom{\tau_m}{n} \cdot \notag \\
&\int_0^{R}  r e^{-  n T \cdot \snr^{-1} A r^{\alpha} } \cdot 
e^{ - \lambda_m K(\alpha, n) T^{\frac{2}{\alpha}} r^2  }  \dint r.
\label{eq:2}
\end{align}
\label{pro:2}
\end{pro}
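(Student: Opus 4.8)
The plan is to reduce the cluster--level quantity $\mathbb{E}^o[N]$ to a spatial integral of the single--receiver coverage probability $p(y)$ that was already computed in Theorem~\ref{thm:1}, specialize via Corollary~\ref{cor:1} to $\ell(r)=Ar^\alpha$, and finish with a polar--coordinate change of variables. Concretely, I would start from the definition $\mathbb{E}^o[N] = \mathbb{E}^o[\sum_{y\in\Phi_{m,o}}\mathbb{I}(\{y\textrm{ is covered}\})]$ and condition on the multicast transmitter process $\Phi_m$. Conditionally on $\Phi_m$, the receiver process $\Phi_{m,o}$ of the typical cluster is an independent mark with intensity measure $\Lambda_o(\cdot)=\lambda_r\nu(\cdot\cap B(o,R))$, and since receivers do not interfere with one another, the coverage event of a receiver sitting at a location $y$ depends only on $y$, on $\Phi_m$, and on the independent fading. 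Campbell's formula applied to $\Phi_{m,o}$ therefore gives $\mathbb{E}[\sum_{y\in\Phi_{m,o}}\mathbb{I}(\{y\textrm{ covered}\})\mid\Phi_m] = \lambda_r\int_{B(o,R)}\mathbb{P}(y\textrm{ covered}\mid\Phi_m)\,\dint y$.

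Next I would take the outer Palm expectation $\mathbb{E}^o[\cdot]$ over $\Phi_m$ and exchange it with the (bounded, nonnegative) integrand by Fubini, obtaining $\mathbb{E}^o[N]=\lambda_r\int_{B(o,R)}p(y)\,\dint y$, where $p(y)=\mathbb{E}^o[\mathbb{P}(y\textrm{ covered}\mid\Phi_m)]$ is exactly the coverage probability of Theorem~\ref{thm:1}; the fact that $\Phi_{m,o}$ need not be Poisson is immaterial, because only its first--order intensity measure enters a first--moment computation and receivers are non--interfering. Substituting the closed form from Corollary~\ref{cor:1}, namely $p(y)=\sum_{n=1}^{\tau_m}(-1)^{n+1}\binom{\tau_m}{n}e^{-nT\snr^{-1}A\|y\|^\alpha}e^{-\lambda_m K(\alpha,n)T^{2/\alpha}\|y\|^2}$, and passing to polar coordinates via $\int_{B(o,R)}f(\|y\|)\,\dint y = 2\pi\int_0^R f(r)\,r\,\dint r$, yields precisely (\ref{eq:2}) once the finite sum is interchanged with the radial integral.

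The only genuinely delicate step is the first reduction: justifying that the Palm expectation of the cluster sum collapses to $\lambda_r\int_{B(o,R)}p(y)\,\dint y$. This needs a clean invocation of the independent--marking structure of $\{(x_i,\Phi_{m,x_i})\}$ and the Campbell--Mecke theorem for marked point processes, together with the conditional independence of $\{\Phi_{m,x_i}\}$ given $\Phi_m$, so that the integrand is identified with the \emph{same} $p(y)$ as in Theorem~\ref{thm:1} rather than some conditioned variant. Everything afterwards is routine --- a Fubini interchange justified by boundedness, the direct substitution of Corollary~\ref{cor:1}, and the change to polar coordinates --- and in particular no integral beyond $K(\alpha,n)$ has to be evaluated.
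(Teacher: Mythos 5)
Your proposal is correct and follows essentially the same route as the paper: reduce $\mathbb{E}^o[N]$ to $\lambda_r\int_{B(o,R)}p(y)\,\dint y$ via Campbell's theorem (using only the first-order intensity of the receiver process) and Fubini, then substitute Corollary~\ref{cor:1} and pass to polar coordinates. The paper merely phrases the reduction through the mean volume of a ``typical coverage cell,'' and your more explicit conditioning on $\Phi_m$ to identify the integrand with the same $p(y)$ as in Theorem~\ref{thm:1} is a slightly more careful rendering of the same argument.
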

\begin{proof}
See Appendix \ref{proof:pro:2}.
\end{proof}

To gain insight from Prop. \ref{pro:2}, we next focus on a few special cases and/or asymptotic results which have simpler expressions.

\subsubsection{No noise} 
In this case we assume that interference is a dominant issue and thus noise is ignored, i.e., $W\equiv 0$. Then the following corollary follows from Prop. \ref{pro:2}.

\begin{cor}
With  $W\equiv 0$ and $\ell (r) = A r^{\alpha}$, the mean number of covered receivers in the typical multicast cluster  is given by
\begin{align}
\mathbb{E}^o [ N ] = &\frac{\pi \lambda_r}{T^{\frac{2}{\alpha}} \lambda_m}   \sum_{n=1}^{\tau_m}  (-1)^{n+1} \binom{\tau_m}{n} \cdot \notag \\
&  \frac{1}{ K(\alpha, n) }  \left( 1- e^{ - \lambda_m K(\alpha, n) T^{\frac{2}{\alpha}} R^2  } \right).\notag 
\end{align}
In particular, as $\lambda_m \to \infty$, $\mathbb{E}^o [ N ] \sim  \frac{\pi  \tilde{K} (\alpha, \tau_m)\lambda_r}{T^{\frac{2}{\alpha}} \lambda_m}$,
where 
$
\tilde{K} (\alpha, \tau_m) = \sum_{n=1}^{\tau_m} (-1)^{n+1} \binom{\tau_m}{n} \frac{1}{ K(\alpha, n) }
$.
\label{cor:2}
\end{cor}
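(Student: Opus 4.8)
The plan is to specialize Proposition~\ref{pro:2} to the no-noise regime and then finish off the remaining one-dimensional integral in closed form. Setting $W\equiv 0$ amounts to $\snr^{-1}=\sigma^2/P_m=0$, so the factor $e^{-nT\cdot\snr^{-1}Ar^{\alpha}}$ in (\ref{eq:2}) collapses to $1$ and $\mathbb{E}^o[N]$ reduces to $2\pi\lambda_r\sum_{n=1}^{\tau_m}(-1)^{n+1}\binom{\tau_m}{n}\int_0^{R} r\,e^{-\lambda_m K(\alpha,n)T^{2/\alpha}r^2}\,\dint r$. Each integral is then evaluated by the elementary identity $\int_0^{R} r\,e^{-cr^2}\,\dint r=(1-e^{-cR^2})/(2c)$ with $c=\lambda_m K(\alpha,n)T^{2/\alpha}$, and substituting this back in yields precisely the stated closed form $\frac{\pi\lambda_r}{T^{2/\alpha}\lambda_m}\sum_{n=1}^{\tau_m}(-1)^{n+1}\binom{\tau_m}{n}\frac{1}{K(\alpha,n)}\bigl(1-e^{-\lambda_m K(\alpha,n)T^{2/\alpha}R^2}\bigr)$.

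For the second claim I would factor out $\frac{\pi\lambda_r}{T^{2/\alpha}\lambda_m}$ and write the bracket as $\tilde K(\alpha,\tau_m)-\sum_{n=1}^{\tau_m}(-1)^{n+1}\binom{\tau_m}{n}\frac{1}{K(\alpha,n)}e^{-\lambda_m K(\alpha,n)T^{2/\alpha}R^2}$. Since the sum is finite and every $K(\alpha,n)$ is a fixed positive number, each exponential tends to $0$ as $\lambda_m\to\infty$ (indeed exponentially fast), so the bracket converges to $\tilde K(\alpha,\tau_m)$, which gives $\mathbb{E}^o[N]\sim\frac{\pi\tilde K(\alpha,\tau_m)\lambda_r}{T^{2/\alpha}\lambda_m}$, as asserted.

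Honestly there is little genuine obstacle here: once Proposition~\ref{pro:2} is available the corollary is routine calculus. The one point that must be checked with some care is that $0<K(\alpha,n)<\infty$ for every $n\ge 1$, since we both divide by $K(\alpha,n)$ and rely on the exponentials truly decaying; this holds because the integrand $t^{-2/\alpha-1}(1-(1+t)^{-n})$ defining $K(\alpha,n)$ is strictly positive on $(0,\infty)$, behaves like $n\,t^{-2/\alpha}$ near $t=0$ (integrable since $\alpha>2$ forces $2/\alpha<1$) and like $t^{-2/\alpha-1}$ near $t=\infty$ (integrable since $2/\alpha+1>1$). If one wants the relation ``$\sim$'' to be non-vacuous one should additionally record that $\tilde K(\alpha,\tau_m)\ne 0$.
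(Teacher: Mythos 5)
Your proposal is correct and matches the paper's (implicit) argument: the corollary is obtained exactly by setting $\snr^{-1}=0$ in Proposition~\ref{pro:2} and evaluating $\int_0^R r e^{-cr^2}\dint r=(1-e^{-cR^2})/(2c)$ termwise, after which the $\lambda_m\to\infty$ limit is immediate since each exponential vanishes. Your extra check that $0<K(\alpha,n)<\infty$ (and the remark that one should also note $\tilde K(\alpha,\tau_m)>0$, e.g.\ via the lower bound $\mathbb{E}^o[N]\geq \lambda_r\pi\bigl(1-e^{-\lambda_m K(\alpha,1)T^{2/\alpha}R^2}\bigr)/\bigl(\lambda_m K(\alpha,1)T^{2/\alpha}\bigr)$, which gives $\tilde K\geq 1/K(\alpha,1)$) is a sensible touch beyond what the paper records.
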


It follows from Corollary \ref{cor:2} that, when decoding threshold $T$ or cluster size $R$ is small,\footnote{Here we do not consider the case that $\lambda_m$ is small; small $\lambda_m$ makes the assumption that the network is interference-limited invalid.}
\begin{align}
\mathbb{E}^o [ N ]
&\sim \lambda_r \pi R^2 \sum_{n=1}^{\tau_m} (-1)^{n+1} \binom{\tau_m}{n} =\lambda_r \pi R^2, \notag
\end{align}
i.e., all the receivers in the typical cluster can be covered in an expectation sense, agreeing with intuition. Further, $\mathbb{E}^o [ N ]$ is independent of $\lambda_m$ and $\tau_m$. The last fact implies that a single multicast transmission is optimal when $T$ or $R$ is small enough. 

In the extreme case with $\lambda_m \to \infty$, $\mathbb{E}^o [ N ]$ is inversely proportional to the multicast transmitter density $\lambda_m$. Note that the number of repetitions $\tau_m$ does not change the scaling law of $\mathbb{E}^o [ N ]$ (with respect to $\lambda_m$); instead, $\tau_m$ affects $\mathbb{E}^o [ N ]$ only up to the multiplicative factor $\tilde{K} (\alpha, n)$.

\subsubsection{$\alpha = 4$} 
In this case $\mathbb{E}^o [ N ]$ in (\ref{eq:2}) reduces to the following:
\begin{align}
\mathbb{E}^o [ N ]
=& \frac{\pi^{\frac{3}{2}}  \lambda_r}{\sqrt{C_1}} e^{\frac{C_2^2}{4C_1}} \sum_{n=1}^{\tau_m} (-1)^{n+1} \binom{\tau_m}{n}    \notag \\
&\left( Q \left(\frac{C_2}{\sqrt{2 C_1}} \right) - Q \left(\sqrt{2 C_1} R^2 + \frac{C_2}{\sqrt{2 C_1}} \right)  \right), \notag 
\end{align}
where $Q(x) = \frac{1}{\sqrt{2\pi}} \int_{x}^{\infty} e^{-t^2/2} \dint t$, $C_1 = A T \cdot \snr^{-1}$ and $C_2 = \lambda_m K(\alpha, n) T^{\frac{2}{\alpha}}$. This gives a quasi-closed form expression for $\mathbb{E}^o [ N ]$ as $Q(x)$ can be numerically evaluated quite easily.

\subsubsection{$\lambda_m$ is asymptotically small}
In this case, using bounded convergence theorem and binomial theorem, 
\begin{align}
\lim_{\lambda_m \to 0} \mathbb{E}^o [ N ]
=&  \pi \lambda_r  \int_0^{R^2} \left( 1 - (1-  e^{-   T \cdot \snr^{-1} A t^{\frac{\alpha}{2}} }  )^{\tau_m} \right) \dint t .  \notag
\end{align}
As $\tau_m$ increases, the above integrand converges to $1$ at a geometric rate and thus the mean number of covered receivers approaches to $\lambda_r \pi R^2$ very quickly. This fact implies that a very small number of repetition transmissions suffices in sparse networks.

Fig. \ref{fig:3} shows the mean number of covered receivers (normalized by all the potential receivers) as a function of multicast times. Again, repetitive transmissions are instrumental but the gain quickly diminishes as $\tau_m$ increases. This implies that if a D2D transmitter would like to cover far away receivers, other approaches rather than simple repetitive transmissions  are expected; such approaches may include increasing transmit power and interference cancellation. 

\begin{figure}
\centering
\includegraphics[width=8cm]{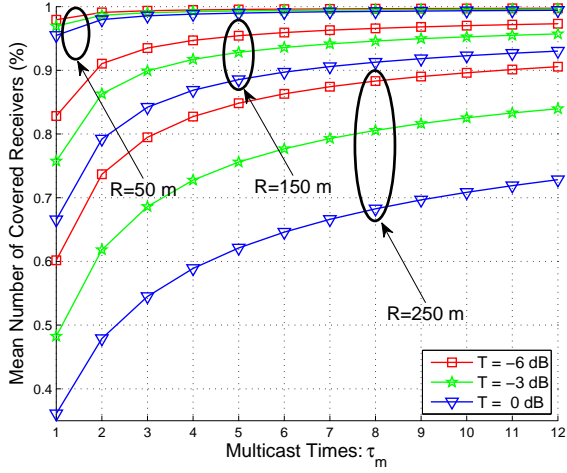}
\caption{Normalized mean number of covered receivers versus multicast times without network assistance.}
\label{fig:3}
\end{figure}

Thus far we have characterized the mean number of covered receivers in the typical cluster. Other properties may be studied with further assumption on the receiver point processes $\{\Phi_{\textrm{m}, x_i}\}$. For concreteness, assume $\{\Phi_{\textrm{m}, x_i}\}$ are Poisson distributed. Then, due to randomness, all the receivers in some clusters may be far away from the multicast transmitter; in an extreme case, there may be no receiver at all in some clusters. We term them \textit{null receiver clusters}. It is of interest to quantify the fraction of null receiver clusters. To this end, we first formalize the concept of null receiver cluster.
\begin{defi}
A multicast cluster is called null receiver cluster if all the receivers have a distance farther than a pre-defined threshold distance $R_{th}$ to the transmitter.
\end{defi} 

A possible criterion for threshold distance $R_{th}$ may be as follows.
$$
\frac{P_m \ell(r) }{N} \geq T , \quad \forall r \leq  R_{th}.
$$
This criterion implies that, without considering interference and fading, a receiver cannot be covered if its distance from the transmitter is farther than $R_{th}$ due to the weak signal. It follows that
$R_{th} = \ell^{(-1)}(\snr^{-1}T)$
where $ \ell^{(-1)}(\cdot)$ denotes the inverse function of $\ell(\cdot)$.

\begin{pro}
The fraction of null receiver clusters equals $e^{ -\lambda_r \pi (\max(R_{th}, R))^2 }$.
\label{pro:8}
\end{pro}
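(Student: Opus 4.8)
The plan is to identify the fraction of null receiver clusters with a void probability of the receiver process attached to the typical multicast transmitter. By the Palm-theoretic interpretation already used before Theorem~\ref{thm:1}, this fraction equals the probability, under $\mathbb{P}^o$, that the typical cluster (the one with $x_0=o$) is a null receiver cluster, i.e.\ that $\Phi_{\textrm{m},o}(B(o,R_{th}))=0$. Since the cluster processes $\{\Phi_{\textrm{m},x_i}\}$ are i.i.d.\ marks of the PPP $\Phi_{\textrm{m}}$ whose mark law does not depend on $\Phi_{\textrm{m}}$, Slivnyak's theorem and the independence of the marks let us drop the Palm conditioning, so the quantity of interest is simply $\mathbb{P}(\Phi_{\textrm{m},o}(B(o,R_{th}))=0)$ with $\Phi_{\textrm{m},o}$ Poisson of intensity measure $\Lambda_o(\cdot)=\lambda_r\,\nu(\cdot\cap B(o,R))$.

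Next I would invoke the void probability formula for a Poisson process, $\mathbb{P}(\Phi_{\textrm{m},o}(A)=0)=e^{-\Lambda_o(A)}$, with $A=B(o,R_{th})$. It then remains to evaluate $\Lambda_o(B(o,R_{th}))=\lambda_r\,\nu\big(B(o,R_{th})\cap B(o,R)\big)$; the two balls are concentric, so their intersection is the ball of the smaller radius and the exponent is $\lambda_r\pi(\min(R_{th},R))^2$, giving the fraction of null receiver clusters as $e^{-\lambda_r\pi(\min(R_{th},R))^2}$. The two regimes are transparent: if $R_{th}\le R$ a cluster is null exactly when no receiver lands in the inner disc of radius $R_{th}$; if $R_{th}>R$ every receiver automatically lies within $R<R_{th}$, so a cluster is null only when it is empty. (This is where I would reconcile the $\min$/$\max$ in the stated formula against the concentric-ball computation.)

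The only genuinely non-routine step is the first one: making ``fraction of null receiver clusters'' precise and reducing it to a single-cluster probability. I would argue it exactly as the paper treats its other spatial averages, writing the fraction as a normalized expectation $\frac{1}{\lambda_m|B|}\mathbb{E}\big[\int_{x\in B}\mathbb{I}(\{\text{cluster at }x\text{ is null}\})\,\Phi_{\textrm{m}}(\d x)\big]$ over a bounded window $B$, applying the Campbell--Mecke formula for the marked PPP to turn this into $\lambda_m|B|$ times the (constant) mark probability, and cancelling. Everything downstream is the one-line void-probability identity together with the elementary geometric fact $\nu(B(o,a)\cap B(o,b))=\pi(\min(a,b))^2$; the Poisson assumption on $\{\Phi_{\textrm{m},x_i}\}$ is precisely what makes the result an exact equality rather than a bound, and no limiting arguments or estimates are required.
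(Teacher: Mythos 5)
Your argument is correct, and it also exposes a sign-of-comparison error in the statement itself: the void probability you compute, $e^{-\lambda_r\pi(\min(R_{th},R))^2}$, is the right answer, and it is exactly what the paper's own proof produces case by case ($e^{-\lambda_r\pi R^2}$ when $R<R_{th}$ and $e^{-\lambda_r\pi R_{th}^2}$ when $R\geq R_{th}$), even though both the proposition and the paper's concluding line write $\max$ instead of $\min$. So your reconciliation step should be resolved in favor of your computation: the stated $\max$ is a typo. Methodologically you take a slightly different and cleaner route than the paper. The paper reduces the fraction to the Palm probability of the ``bad geometry'' event by assertion, then splits into two cases: for $R<R_{th}$ it identifies the event with an empty cluster, and for $R\geq R_{th}$ it conditions on $\Phi_{\textrm{m},x_0}(B(o,R))=n$, uses that the $n$ points are i.i.d.\ uniform on $B(o,R)$, and resums the Poisson series. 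You instead make the spatial-average reduction explicit via Campbell--Mecke for the independently marked PPP (a point the paper glosses over), and then apply the Poisson void probability once to $B(o,R_{th})$ under the intensity measure $\lambda_r\,\nu(\cdot\cap B(o,R))$, so the concentric-ball intersection $\pi(\min(R_{th},R))^2$ handles both regimes without a case split or series manipulation. Both proofs are elementary and rely on the same Poisson assumption; yours buys a one-line exact computation and a more careful justification of the ``fraction equals typical-cluster probability'' step, while the paper's two-case argument makes the geometric dichotomy more explicit but is otherwise equivalent.
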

\begin{proof}
See Appendix \ref{proof:pro:8}.
\end{proof}
Note that, conditioning on $\Phi_m$, if $\{\Phi_{\textrm{m}, x_i}\}$ are i.i.d sampled over time, the fraction of null receiver clusters can also be interpreted as the fraction of time that an arbitrary cluster is a null receiver cluster.

\subsection{Multicast Throughput}

Repetition transmission helps improve multicast reliability with increased coverage probability and number of covered receivers. However, repetition consumes more degrees of freedom and thus hurts the throughput. In other words, there exists a fundamental tradeoff between efficiency and reliability. In this section, we explore multicast efficiency. To this end, we define \textit{multicast throughput} (denoted by $\xi$) as follows.

\begin{defi}
Multicast throughput is defined as the mean of the sum rate of all the receivers in the typical multicast cluster. Mathematically,  
\begin{equation}
\xi = \mathbb{E}^o [ N ]  \cdot \frac{1}{\tau_m} \log(1+T). 
\label{eq:xi}
\end{equation}
\end{defi} 

Multicast throughput may serve as a sensible objective for choosing appropriate multicast rate, i.e., $T$. On the one hand, with higher $T$, more sophisticated modulation and coding scheme can be supported and thus higher rate may be achieved. On the other hand, higher $T$ reduces the number of receivers that can be covered by the multicast transmitter. The definition of multicast throughput takes both factors into account by combining $\log(1+T)$ and $\mathbb{E}^o [ N ]$. So we may optimize multicast rate by maximizing the multicast throughput:
\begin{align}
\textrm{maximize}_{T > 0} \quad \mathbb{E}^o [ N ]  \cdot \frac{1}{\tau_m} \log(1+T), 
\label{eq:rateOpt}
\end{align}
where $\mathbb{E}^o [ N ]$ is explicitly given in Prop. \ref{pro:2}. The above optimization is of single variable and thus can be solved efficiently. More explicit results may be obtained under special cases; for example, the following proposition considers the case with noise ignored and $\lambda_m \to \infty$. 

\begin{pro}
With $W \equiv 0$ and $\lambda_m \to \infty$, multicast rate optimization (\ref{eq:rateOpt}) reduces to
\begin{align}
\textrm{maximize}_{T > 0} \quad T^{-\frac{2}{\alpha}} \log(1+T).
\end{align}
Further, it has a unique optimal point $T^\star > \frac{\alpha}{2} -1$ that equals the unique solution of the equation: 
$
\frac{x}{1+x} = \frac{2}{\alpha} \log(1+x).
$
\label{pro:9}
\end{pro}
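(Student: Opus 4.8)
The plan is to split the claim into two independent parts: first, that in the regime $W\equiv 0$, $\lambda_m\to\infty$ the objective in (\ref{eq:rateOpt}) collapses, up to a constant that does not depend on $T$, to $h(T):=T^{-2/\alpha}\log(1+T)$; and second, an elementary single-variable calculus analysis of $h$ that exhibits the unique maximizer and identifies it with the stated transcendental equation.

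For the reduction I would simply invoke Corollary \ref{cor:2}: with $W\equiv 0$ and $\ell(r)=Ar^\alpha$, as $\lambda_m\to\infty$ one has $\mathbb{E}^o[N]\sim \frac{\pi\tilde K(\alpha,\tau_m)\lambda_r}{T^{2/\alpha}\lambda_m}$, where $\tilde K(\alpha,\tau_m)=\sum_{n=1}^{\tau_m}(-1)^{n+1}\binom{\tau_m}{n}\frac{1}{K(\alpha,n)}$ is independent of $T$ (and strictly positive, since $\mathbb{E}^o[N]>0$). Hence
\begin{align}
\mathbb{E}^o[N]\cdot\frac{1}{\tau_m}\log(1+T)\ \sim\ \frac{\pi\tilde K(\alpha,\tau_m)\lambda_r}{\tau_m\lambda_m}\cdot T^{-2/\alpha}\log(1+T),
\end{align}
and because the leading factor is a strictly positive constant not depending on $T$, maximizing the left-hand side over $T>0$ is equivalent to maximizing $h(T)$. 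This gives the first assertion.

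For the second assertion I would analyze $h$ on $(0,\infty)$. Using $\alpha>2$, $h>0$ there while $h(T)\to 0$ both as $T\to 0^+$ (since $\log(1+T)\sim T$ gives $h(T)\sim T^{1-2/\alpha}$) and as $T\to\infty$ (polynomial decay beats the logarithm), so a maximizer exists and is an interior critical point. Differentiating,
\begin{align}
h'(T)=T^{-2/\alpha-1}\Bigl(\tfrac{T}{1+T}-\tfrac{2}{\alpha}\log(1+T)\Bigr)=:T^{-2/\alpha-1}\,g(T),
\end{align}
so critical points are exactly the positive roots of $g$, i.e. of $\frac{x}{1+x}=\frac{2}{\alpha}\log(1+x)$. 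Now $g(0)=0$ and
\begin{align}
g'(T)=\frac{1}{(1+T)^2}-\frac{2}{\alpha(1+T)}=\frac{1}{1+T}\Bigl(\frac{1}{1+T}-\frac{2}{\alpha}\Bigr),
\end{align}
which, since $\alpha>2$ forces $\frac{\alpha}{2}-1>0$, is positive on $(0,\frac{\alpha}{2}-1)$ and negative on $(\frac{\alpha}{2}-1,\infty)$. Thus $g$ increases strictly from $g(0)=0$ to a positive maximum at $T=\frac{\alpha}{2}-1$ and then decreases strictly; since $g(T)\to 1-\infty=-\infty$, the intermediate value theorem yields a unique root $T^\star\in(\frac{\alpha}{2}-1,\infty)$, so in particular $T^\star>\frac{\alpha}{2}-1$. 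Moreover $g>0$ on $(0,T^\star)$ and $g<0$ on $(T^\star,\infty)$, so by the factorization of $h'$ the function $h$ is strictly increasing then strictly decreasing, making $T^\star$ its unique global maximizer.

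The argument is essentially elementary, so the only real care points are: checking that the $\lambda_m\to\infty$ asymptotic legitimately reduces the optimization (the prefactor must be $T$-independent and nonzero, which is exactly where Corollary \ref{cor:2} enters), and the unimodality argument for $g$ — in particular that $g(0)=0$ together with the single sign change of $g'$ at $\frac{\alpha}{2}-1>0$ (which crucially uses $\alpha>2$) forces exactly one positive root, located strictly to the right of $\frac{\alpha}{2}-1$, and that this critical point is a maximum rather than a minimum of $h$.
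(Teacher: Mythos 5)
Your proposal is correct and follows essentially the same route as the paper: reduce the objective via Corollary \ref{cor:2} to $T^{-2/\alpha}\log(1+T)$, then factor the derivative as $T^{-2/\alpha-1}g(T)$ with $g(x)=\frac{x}{1+x}-\frac{2}{\alpha}\log(1+x)$ and use the single sign change of $g'$ at $\frac{\alpha}{2}-1$ to get a unique root $T^\star>\frac{\alpha}{2}-1$ and unimodality of the objective. The only cosmetic differences are that you infer positivity of $g$ at its peak from strict increase away from $g(0)=0$ rather than evaluating $g(\frac{\alpha}{2}-1)$ explicitly as the paper does, and you add the (not strictly needed) boundary limits of $h$.
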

\begin{proof}
See Appendix \ref{proof:pro:9}.
\end{proof}

To gain some intuition, we show multicast throughput as a function of $T$ in Fig. \ref{fig:7}. It is shown that the optimal rate $T^\star$ is relatively robust to $\tau_m$; for example, with $\alpha = 3.5$, optimal $T^\star$ is around $7$ dB for either  $\tau_m = 1$ or $\tau_m = 4$. It is also shown that higher multicast throughput is obtained with median pathloss exponent, agreeing with intuition: High pathloss exponent provides better spatial separation in terms of interference but also leads to high loss of signal power; whereas the converse is true with low pathloss exponent.

\begin{figure}
\centering
\includegraphics[width=8cm]{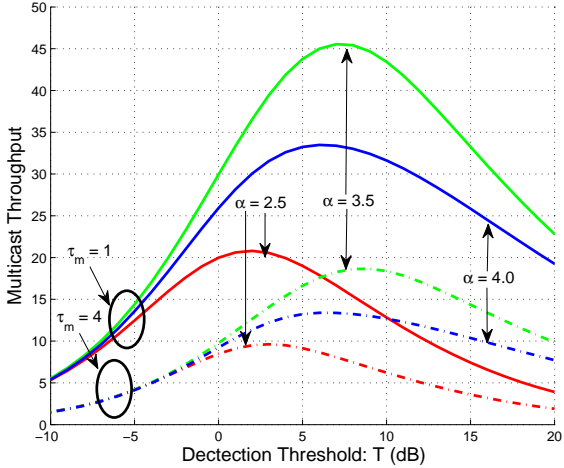}
\caption{Multicast throughput versus detection threshold: $R = 150$ m.}
\label{fig:7}
\end{figure}

In (\ref{eq:xi}), as $\tau_m$  increases, $\mathbb{E}^o [ N ]$ increases but $\frac{1}{\tau_m} \log(1+T)$ decreases. As the latter typically dominates the former, the defined multicast throughput $\xi$ decreases with $\tau_m$. We illustrate the tradeoff between efficiency -- multicast throughput $\xi$ -- and reliability -- mean number of covered receivers $\mathbb{E}^o [ N ]$ in Fig. \ref{fig:8}. How to strike a balance between efficiency and reliability depends on the application scenarios. Nevertheless, the bottom line may be that reliability should not be stressed to an extent such that multicast loses its superiority over unicast. For example, ignoring overhead issues, reasonable choice of $\tau_m$ should satisfy the follow relation:
$$
\xi  \geq \mathbb{E}^o \left[  \frac{\tau_m}{|\Phi_{m,x_0}|}  \sum_{y \in \Phi_{m,x_0}}  p(y) \log(1+T)   \right],
$$
where the right hand side denotes the achievable sum rate if the typical transmitter unicasts to each receiver separately.

\begin{figure}
\centering
\includegraphics[width=8cm]{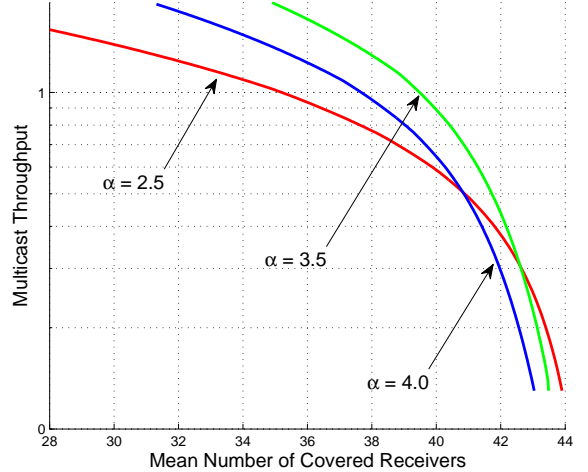}
\caption{Tradeoff between efficiency -- multicast throughput -- and reliability -- mean number of covered receivers.}
\label{fig:8}
\end{figure}

\subsection{Impact of Mobility}

Recall that in our default model multicast transmitters are static. Correspondingly, in the previous analysis on coverage probability and mean number of covered receivers, we first perform time average by fixing the spatial realization of $\Phi_m$; then we de-condition on $\Phi_m$ to average out the spatial randomness. A natural question arises: What is the impact of mobility? To answer this question, we assume in this section that the multicast transmitter process $\Phi_m$ is independently re-sampled at each time slot during the multicast transmissions, i.e., $\{\Phi_m(n)\}$ are independent PPPs. This assumption can model the scenario where the multicast transmitters are of high mobility. The static and high mobility cases represent two extreme mobility pattern; the real mobility pattern lies somewhere in between \cite{lin2012towards}.

Surprisingly, based on the results for static scenario, the characterization of the performance of high mobility scenario is quite clean, as given in the following Prop. \ref{pro:3}.

\begin{pro}
With high mobility assumption and path-loss function $\ell (r) = A r^{\alpha}$, the coverage probability $p (y  )$ and mean number of covered receivers $\mathbb{E}^o [ N ]$ are respectively given by (\ref{eq:3}) and (\ref{eq:2}) but with $K(\alpha, n)$ replaced by $n K(\alpha, 1)$. 
\label{pro:3}
\end{pro}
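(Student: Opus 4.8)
The plan is to reduce the high-mobility case to the one-shot ($\tau_m=1$) result already established in Corollary~\ref{cor:1}, by observing that resampling $\Phi_m$ at every slot turns the temporally correlated events $E_1(y),\dots,E_{\tau_m}(y)$ into i.i.d.\ events, so the temporal correlation — the only place where the static analysis departs from a product form — disappears.

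First, exactly as in the static case, the events $E_1(y),\dots,E_{\tau_m}(y)$ are exchangeable, so the inclusion--exclusion identity used in the proof of Theorem~\ref{thm:1} still expresses the coverage probability as $\sum_{n=1}^{\tau_m}(-1)^{n+1}\binom{\tau_m}{n}p_n(y)$ with $p_n(y)=\mathbb{P}^o\!\left(\cap_{m=1}^{n}E_m(y)\right)$, now evaluated under the resampled model. The key new observation is that, under the high-mobility assumption, $E_m(y)$ is a function only of $\Phi_m(m)\setminus\{o\}$, of the time-$m$ desired-link fade $F_{y,x_0}(m)$, the interfering fades $\{F_{y,x}(m)\}$ and the time-$m$ noise; since all of these are independent across $m$, the events $E_1(y),\dots,E_n(y)$ are independent and hence $p_n(y)=\big(p_1(y)\big)^{n}$.

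Second, I would substitute the value of $p_1(y)$, which is just $p(y)$ from Corollary~\ref{cor:1} specialized to $\tau_m=1$, namely $p_1(y)=e^{-\ell(\|y\|)T\,\snr^{-1}}e^{-\lambda_m K(\alpha,1)T^{2/\alpha}\|y\|^2}$. Raising to the $n$-th power gives $p_n(y)=e^{-n\ell(\|y\|)T\,\snr^{-1}}e^{-nK(\alpha,1)\lambda_m T^{2/\alpha}\|y\|^2}$, which is precisely the $n$-th summand in (\ref{eq:3}) with $K(\alpha,n)$ replaced by $nK(\alpha,1)$; summing over $n$ yields the claimed formula for the coverage probability under high mobility. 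For the mean number of covered receivers I would repeat verbatim the Campbell/Mecke computation from the proof of Proposition~\ref{pro:2}: because the covered event at a given location does not depend on the other receivers, $\mathbb{E}^o[N]=\lambda_r\int_{B(o,R)}p(y)\,\dint y=2\pi\lambda_r\int_0^{R}r\,p(r)\,\dint r$, and inserting the coverage probability just obtained reproduces (\ref{eq:2}) with $K(\alpha,n)\to nK(\alpha,1)$.

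The only step that genuinely requires care — hence the main (modest) obstacle — is justifying the factorization $p_n(y)=p_1(y)^{n}$ rigorously inside the Palm framework: one must confirm that, under $\mathbb{P}^o$ and the high-mobility model, the values $\sinr_{y,x_0}(1),\dots,\sinr_{y,x_0}(\tau_m)$ are genuinely i.i.d., i.e.\ that conditioning on the typical transmitter being at the origin introduces no dependence across slots through the resampled interferer field. Everything after that is a direct substitution into formulas already proved, so no new integrals appear. As a consistency check, note $K(\alpha,n)\le nK(\alpha,1)$ with equality only at $n=1$, so each $p_n(y)$ here is no larger than its static counterpart, reflecting the loss of the beneficial temporal correlation; the net effect of mobility on $p(y)$ itself nevertheless goes the other way because of the alternating signs in the inclusion--exclusion sum, consistent with the remarks following the proposition.
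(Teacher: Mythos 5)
Your proposal is correct and follows essentially the same route as the paper's proof: under the resampling assumption the events $E_1(y),\dots,E_n(y)$ factor, so $p_n(y)=p_1(y)^n$ with $p_1(y)$ given by the one-shot result of Corollary~\ref{cor:1}, and the inclusion--exclusion sum together with $\mathbb{E}^o[N]=\lambda_r\int_{B(o,R)}p(y)\,\dint y$ from the proof of Proposition~\ref{pro:2} gives the claim. The factorization step you flag as needing care is exactly where the paper invokes the independence of the per-slot PPPs $\{\Phi_m(k)\}$ (with the typical transmitter fixed at the origin under $\mathbb{P}^o$ and Slyvnyak's theorem applied slot by slot), so no additional argument is required.
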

\begin{proof}
See Appendix \ref{proof:pro:3}.
\end{proof}

To get some insight about how mobility affects multicast efficiency, let us recall in the static case
$
\log 1/p_n (y  ) 
$
is proportional to $K(\alpha, n)$; in the high mobility case $
\log 1/ p_n (y  ) 
$
is proportional to $n K(\alpha, 1)$.  The following Lemma \ref{lem:1} shows that $n K(\alpha, 1)$ is greater than $K(\alpha, n)$ except the trivial case $n=1$. It follows that $p_n (y), n >1$, in the static case is larger than its counterpart in the high mobility case. In other words, mobility \textit{hurts} the performance if one would like to support a target SINR for $n$ successive transmissions, agreeing with intuition: Motility brings extra randomness to the received SINR and thus makes it harder to successively meet the target SINR. 
\begin{lem}
For any integer $n >1$,
$
n K(\alpha, 1) - K(\alpha, n) > 0.
$
\label{lem:1}
\end{lem}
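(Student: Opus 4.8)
The plan is to use the fact that $K(\alpha,1)$ and $K(\alpha,n)$ are integrals of the \emph{same} weight $t^{-2/\alpha-1}$ against different increasing functions of $t$, so that the difference collapses to a single integral whose integrand I can bound pointwise. Writing out the definitions,
\begin{align}
n K(\alpha,1) - K(\alpha,n) = \frac{2\pi}{\alpha}\int_0^\infty t^{-\frac{2}{\alpha}-1}\Bigl[\,n\Bigl(1-\tfrac{1}{1+t}\Bigr) - \Bigl(1-\tfrac{1}{(1+t)^n}\Bigr)\Bigr]\dint t, \notag
\end{align}
so it suffices to show that the bracketed function $g_n(t)$ is strictly positive on $(0,\infty)$ when $n>1$, and then to check that the integral converges so that a positive integrand yields a positive integral.

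For the pointwise bound I would substitute $u = 1/(1+t) \in (0,1)$, which turns the bracket into $g_n = n(1-u) - (1-u^n)$. Factoring $1-u^n = (1-u)\sum_{k=0}^{n-1}u^k$ gives $g_n = (1-u)\bigl(n - \sum_{k=0}^{n-1}u^k\bigr)$. For $u\in(0,1)$ the first factor is positive and the second is positive because every summand with $k\ge 1$ is strictly less than $1$; hence $g_n(t)>0$ for every $t>0$ when $n\ge 2$, while $g_1\equiv 0$, which is exactly why the case $n=1$ is excluded.

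It remains to verify integrability. A Taylor expansion at $t=0$ gives $g_n(t) = \binom{n}{2}t^2 + O(t^3)$, so $t^{-2/\alpha-1}g_n(t) = O(t^{1-2/\alpha})$, which is integrable near the origin precisely because $\alpha>2$; and $g_n(t)\to n-1$ as $t\to\infty$, so the integrand decays like $(n-1)t^{-2/\alpha-1}$ and is integrable at infinity. Since the integrand is continuous, nonnegative, and not identically zero, the integral is strictly positive, giving $nK(\alpha,1)-K(\alpha,n)>0$ for all integers $n>1$.

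I do not expect a genuine obstacle here; the only points requiring attention are choosing the substitution $u=1/(1+t)$ that linearizes the comparison and handling the two integrability regimes (which is also what legitimizes splitting the difference under a single integral sign). As an alternative, one can integrate $K(\alpha,n)$ by parts and recognize a Beta integral to obtain the closed form $K(\alpha,n) = \pi\,\Gamma\bigl(1-\tfrac{2}{\alpha}\bigr)\,\Gamma\bigl(n+\tfrac{2}{\alpha}\bigr)/\Gamma(n)$; the claim then reduces to $\prod_{k=1}^{n-1}\bigl(1+\tfrac{2}{\alpha k}\bigr) < n$, which follows at once from $\tfrac{2}{\alpha}<1$ together with the telescoping identity $\prod_{k=1}^{n-1}(1+\tfrac1k) = n$.
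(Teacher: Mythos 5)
Your proof is correct and follows essentially the same route as the paper: both collapse $nK(\alpha,1)-K(\alpha,n)$ into a single integral against the weight $t^{-2/\alpha-1}$ and show the resulting integrand (your $g_n$, the paper's $f$) is positive pointwise --- the paper via $f'(t)\ge 0$ together with $f(0)=0$, you via the substitution $u=1/(1+t)$ and the factorization $(1-u)\bigl(n-\sum_{k=0}^{n-1}u^k\bigr)$. If anything your write-up is slightly more complete, since you also check integrability at $0$ and $\infty$ and obtain the strict inequality explicitly (the paper's argument concludes only ``$\ge 0$''), and the Gamma-function closed form you sketch is a valid independent alternative.
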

\begin{proof}
See Appendix \ref{proof:lem:1}.
\end{proof}

The impact of mobility on $p (y  )$ or $\mathbb{E}^o [ N ]$ is more subtle. 
Fig. \ref{fig:4} compares the mean number of covered receivers in static network (i.e., our default model) to that of mobile network. Interestingly, it shows that mobility can increase the mean number of covered receivers. Further, the loss due to the static environment (at least at the time scale of $\tau_m$) can be hardly overcome by increasing the number of retransmissions. This is because the signal and interference powers largely depend on the node locations; multiple transmissions may exploit the fast fading but cannot fundamentally change the signal and interference powers. 

\begin{figure}
\centering
\includegraphics[width=8cm]{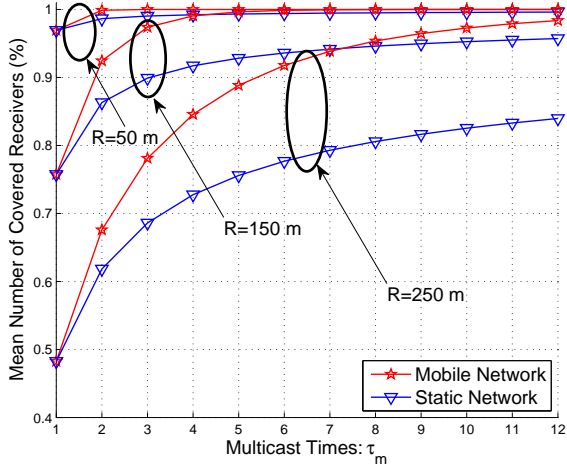}
\caption{Mobility increases the mean number of covered receivers.}
\label{fig:4}
\end{figure}

\section{Multicast with Network Assistance}

In this section we analyze the multicast performance by incorporating network assistance, i.e., allowing the network to relay the multicast signals. Recall that $z_{x_0} = z_o$ denotes the BS that is closest to the typical multicast transmitter. We first study the probability that the receiver located at $y \in \Phi_{\textrm{m}, x_0}$ is covered by the BS $z_o$ in the following Lemma \ref{lem:4}.
\begin{lem}
The probability that the receiver located at $y \in \Phi_{\textrm{m}, x_0}$ is covered by the BS $z_o$ is given by
\begin{align}
p_{c} (y ) 
&= \int_{\mathbb{R}^2} p_{c} (y |  x ) \cdot  \lambda_b e^{ - \lambda_b \pi \| x \|^2 } \d x,
\end{align}
where $p_{c} (y |  x )$ equals
\begin{align}
e^{ - T \ell ( \| x - y  \| ) \snr_c^{-1}  - \int_{ B^{c}(0, \| x \|) }  ( 1 - \frac{1}{   1 + T \ell ( \|   x - y \| ) /   \ell ( \| z - y   \| )  } ) \lambda_b \dint z }.
\end{align}
\label{lem:4}
\end{lem}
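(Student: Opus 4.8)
The plan is to condition on the location of the serving base station $z_o$, the BS nearest to the typical transmitter $x_0=o$, and then evaluate the single-shot downlink coverage probability through the Laplace transform of the aggregate base-station interference, in the same spirit as the proof of Theorem \ref{thm:1}. Since $\Phi_{\textrm{b}}$ is independent of $\Phi_m$ and of the receiver processes, the Palm conditioning on $\Phi_m$ is immaterial for the event $E^{(c)}(y)=\{\sinr^{(c)}_{y,z_o}\ge T\}$, so $y$ may be treated as a fixed point of $\mathbb{R}^2$. I would first record the law of $z_o$: by the Campbell--Mecke formula together with Slivnyak's theorem, the probability that a point of $\Phi_{\textrm{b}}$ lies in $\d x$ and is the nearest one to the origin equals $\lambda_b\,\d x$ times the void probability $\mathbb{P}(\Phi_{\textrm{b}}(B(0,\|x\|))=0)=e^{-\lambda_b\pi\|x\|^2}$, so $z_o$ has density $\lambda_b e^{-\lambda_b\pi\|x\|^2}$ on $\mathbb{R}^2$ and $p_c(y)=\int_{\mathbb{R}^2}p_c(y\mid x)\,\lambda_b e^{-\lambda_b\pi\|x\|^2}\,\d x$. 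The crucial point attached to this conditioning is that, given $z_o=x$, the interfering base stations $\Phi_{\textrm{b}}\setminus\{z_o\}$ (every BS other than the serving one being an active interferer) form, again by Slivnyak's theorem, a homogeneous PPP of intensity $\lambda_b$ restricted to the exterior region $B^c(0,\|x\|)$, while all link fades remain i.i.d.\ $\textrm{Exp}(1)$ and independent of everything else.

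Conditioned on $z_o=x$, I would then peel off the desired-signal fade $F_{y,z_o}\sim\textrm{Exp}(1)$ exactly as in the proof of Theorem \ref{thm:1}: since $\mathbb{P}(F_{y,z_o}\ge a)=e^{-a}$ for deterministic $a$, conditioning on the interference $I_y^{(c)}$ gives
\begin{align}
p_c(y\mid x)=\mathbb{E}\!\left[e^{-T\ell(\|x-y\|)(\snr_c^{-1}+I_y^{(c)})}\right]=e^{-T\ell(\|x-y\|)\snr_c^{-1}}\,\mathcal{L}_{I_y^{(c)}}\!\big(T\ell(\|x-y\|)\big). \notag
\end{align}
It then remains to compute $\mathcal{L}_{I_y^{(c)}}(s)=\mathbb{E}[e^{-sI_y^{(c)}}]$, the Laplace transform of the interference generated by the exterior PPP of base stations. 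Using the probability generating functional of the PPP and averaging over the i.i.d.\ $\textrm{Exp}(1)$ fades via $\mathbb{E}_F[e^{-sF/\ell(r)}]=(1+s/\ell(r))^{-1}$ yields
\begin{align}
\mathcal{L}_{I_y^{(c)}}(s)=\exp\!\left(-\lambda_b\int_{B^c(0,\|x\|)}\Big(1-\frac{1}{1+s/\ell(\|z-y\|)}\Big)\d z\right), \notag
\end{align}
and substituting $s=T\ell(\|x-y\|)$ produces exactly the claimed expression for $p_c(y\mid x)$; plugging this into the outer integral finishes the proof.

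I expect the only genuinely delicate step to be the conditioning argument — the clean assertion that, given the nearest BS sits at $x$, the interfering base stations are a homogeneous PPP on $B^c(0,\|x\|)$ and the link fades stay i.i.d.\ $\textrm{Exp}(1)$ — since this (and not the Laplace-transform algebra) is what distinguishes the computation from Theorem \ref{thm:1}, where no exclusion region appears. It rests on Slivnyak's theorem and the independence built into the marked Poisson model; once it is in place, the remainder is the same routine calculation as for Theorem \ref{thm:1}, with $\Phi_m$ replaced by the exterior-restricted $\Phi_{\textrm{b}}$ and the noise term $\snr^{-1}$ replaced by $\snr_c^{-1}$.
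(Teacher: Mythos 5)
Your proposal is correct and is essentially the argument the paper intends: the paper omits the proof of Lemma \ref{lem:4}, stating it follows from \cite{andrews2011tractable}, and your derivation is precisely that standard one --- nearest-BS conditioning giving the Rayleigh-type density $\lambda_b e^{-\lambda_b\pi\|x\|^2}$, Slivnyak's theorem making the non-serving BSs a PPP of intensity $\lambda_b$ on $B^c(0,\|x\|)$ (exclusion ball centered at the typical transmitter, with the desired and interfering distances measured to the receiver $y$), peeling off the $\textrm{Exp}(1)$ fade, and evaluating the interference Laplace transform via the PGFL at $s=T\ell(\|x-y\|)$. No gaps worth flagging.
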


The proof of Lemma \ref{lem:4} follows from \cite{andrews2011tractable} and is omitted for brevity. It is noticed that the domains of integrations in Lemma \ref{lem:4} are hard to manipulate  to get more explicit results. To overcome this inconvenience, we shall adopt the following approximation: 
\begin{align}
\|z-y\| \approx \| z - x_0 \|, \forall z \in \Phi_b. 
\label{eq:approx}
\end{align}
The above approximation may be justified when the multicast regions are small compared to the coverage area of each BS.  With the above approximation, the following Corollary \ref{cor:3} can be obtained.
\begin{cor}
With the approximation (\ref{eq:approx}) and $\ell (r) = A r^{\alpha}$, $p_{c} (y) \approx p_{c}, \forall y \in  \Phi_{\textrm{m}, x_0}$, where
\begin{align}
p_{c} 
&\triangleq 2\pi \lambda_b  \int_{0}^{\infty} e^{ - T \cdot \snr_c^{-1} A r^{\alpha} } e^{ - 2\pi \lambda_b H(T,\alpha) r^2 }  e^{ - \lambda_b \pi r^2 } r \dint r ,
\end{align}
and
$
H(T, \alpha) = \int_{1}^{\infty} \frac{x}{1+x^{\alpha}/T}  \dint x
$. In particular,  when $\alpha = 4$, $p_c = \frac{\pi^{\frac{3}{2}}\lambda_b}{2\sqrt{C_3}} e^{\frac{C_4^2}{4C_3}} Q\left( \frac{C_4}{\sqrt{2C_3}} \right)$, where $C_3 = A T \cdot \snr_c^{-1}$ and $C_4 = 2\pi \lambda_b H(T,\alpha) + \pi \lambda_b$; when there is no noise, $p_c = \frac{1}{1+2H(T, \alpha) }$.
\label{cor:3}
\end{cor}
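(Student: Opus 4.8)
The plan is to start from the exact expression in Lemma~\ref{lem:4} and feed it the approximation~(\ref{eq:approx}) together with the path-loss law $\ell(r)=Ar^\alpha$. Since $x_0=o$, the approximation reads $\|z-y\|\approx\|z\|$ for every $z\in\Phi_b$, and in particular $\|x-y\|\approx\|x\|$ for the serving BS located at $x$. Substituting this into the formula for $p_{c}(y|x)$ in Lemma~\ref{lem:4} removes every occurrence of $y$, so the conditional coverage probability depends on $x$ only through $r\triangleq\|x\|$; consequently $p_{c}(y)$ itself no longer depends on $y$, which is precisely the assertion $p_{c}(y)\approx p_{c}$.

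\noindent Next I would evaluate the two pieces of the now $y$-free exponent. The signal/noise term is immediate: $T\ell(\|x-y\|)\snr_c^{-1}\approx TA r^\alpha\snr_c^{-1}$. For the interference term I pass to polar coordinates on $B^{c}(0,r)$, writing $\int_{B^{c}(0,\|x\|)}\big(1-(1+T\ell(\|x-y\|)/\ell(\|z-y\|))^{-1}\big)\lambda_b\,\dint z\approx 2\pi\lambda_b\int_{r}^{\infty}\big(1-(1+Tr^\alpha/s^\alpha)^{-1}\big)s\,\dint s$, then substituting $s=rx$ (so $s\,\dint s=r^2 x\,\dint x$), which pulls out an $r^2$ and, after simplifying $1-(1+Tr^\alpha/s^\alpha)^{-1}=(1+x^\alpha/T)^{-1}$, leaves $2\pi\lambda_b r^2\int_{1}^{\infty}x/(1+x^\alpha/T)\,\dint x=2\pi\lambda_b H(T,\alpha)r^2$. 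Combining this with the nearest-BS density $\lambda_b e^{-\lambda_b\pi\|x\|^2}$ supplied by Lemma~\ref{lem:4} and integrating over $x\in\mathbb{R}^2$ in polar form yields $p_{c}=2\pi\lambda_b\int_{0}^{\infty}e^{-TA\snr_c^{-1}r^\alpha}e^{-2\pi\lambda_b H(T,\alpha)r^2}e^{-\lambda_b\pi r^2}r\,\dint r$, the stated expression.

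\noindent For the two closed forms I would then specialize $\alpha$. Setting $\alpha=4$ and substituting $u=r^2$ turns the integral into the Gaussian form $\pi\lambda_b\int_{0}^{\infty}e^{-C_3u^2-C_4u}\,\dint u$ with $C_3=AT\snr_c^{-1}$ and $C_4=2\pi\lambda_b H(T,\alpha)+\pi\lambda_b$; completing the square in $u$ and rescaling to match the normalization of $Q(\cdot)$ gives the claimed $Q$-function expression. When noise is ignored, $\snr_c^{-1}=0$, so the $u^2$ term vanishes and the integral collapses to $\pi\lambda_b/C_4=1/(1+2H(T,\alpha))$.

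\noindent I do not expect a genuine obstacle here: the two points needing care are (i) verifying that the approximation really eliminates all $y$-dependence in $p_{c}(y|x)$, so that $p_{c}(y)\approx p_{c}$ is well posed, and (ii) performing the change of variables $s=rx$ cleanly so that $H(T,\alpha)$ emerges with the correct limits and no stray factor of $r$, after which the remaining $\alpha=4$ and no-noise computations are routine completing-the-square and elementary-integral manipulations.
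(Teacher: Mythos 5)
Your derivation follows exactly the route the paper intends: Corollary~\ref{cor:3} is given without an explicit proof, being the direct substitution of the approximation (\ref{eq:approx}) into Lemma~\ref{lem:4}, and your steps — noting that the approximation removes all $y$-dependence from $p_c(y|x)$, passing to polar coordinates and rescaling $s=rx$ to extract $2\pi\lambda_b H(T,\alpha)r^2$, then integrating against the nearest-BS density — are precisely that computation, and your no-noise evaluation $\pi\lambda_b/C_4 = 1/(1+2H(T,\alpha))$ is correct. One caveat: for $\alpha=4$ you assert that completing the square ``gives the claimed $Q$-function expression'' without carrying the constants through; if you do, $p_c=\pi\lambda_b\int_0^\infty e^{-C_3u^2-C_4u}\dint u = \frac{\pi^{3/2}\lambda_b}{\sqrt{C_3}}\,e^{\frac{C_4^2}{4C_3}}Q\bigl(\tfrac{C_4}{\sqrt{2C_3}}\bigr)$, i.e.\ without the factor $2$ in the denominator appearing in the corollary's statement. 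The stated constant is in fact inconsistent with the rest of the corollary: as $C_3\to 0$, using $Q(x)\sim e^{-x^2/2}/(x\sqrt{2\pi})$, the paper's form tends to $\frac{1}{2(1+2H(T,\alpha))}$ instead of the no-noise value $\frac{1}{1+2H(T,\alpha)}$, whereas your computation's constant recovers it (and matches the analogous $\alpha=4$ prefactor $\pi^{3/2}\lambda_r/\sqrt{C_1}$ in the expression for $\mathbb{E}^o[N]$). So the discrepancy is evidently a typo in the corollary rather than a flaw in your argument, but you should not claim agreement with the printed formula without checking it.
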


For simplicity we will use equality instead of an approximation in the sequel.
With network assistance, the probability that the receiver $y \in \Phi_{\textrm{m}, x_0}$ is covered as long as either the BS $z_o$ or the multicast transmitter $x_0$ covers it. Further, these two events are independent. It follows that the coverage probability at $y \in \Phi_{\textrm{m}, x_0}$ with network assistance equals
\begin{align}
\tilde{p}_{0} (y) &= 1 - (1 - p_c ) (1 - p_{0} (y)). \notag 
\end{align}
Rearranging the above equality yields the following result.
\begin{pro}
With network assistance, the coverage probability of the receiver $y \in \Phi_{\textrm{m}, 0}$  equals
\begin{align}
\tilde{p} (y) 
& =   p(y) + p_c   (1 -  p (y) ),
\end{align}
where $p (y)$ and $p_c$ are given in Theorem \ref{thm:1} and Corollary \ref{cor:3}, respectively. Accordingly, the mean number of covered receivers equals
\begin{align}
\mathbb{E}^o [ \tilde{N} ] = \mathbb{E}^o [ N ] + p_c ( \lambda_r \pi R^2 - \mathbb{E}^o [ N ]  ).
\end{align}
\label{pro:5}
\end{pro}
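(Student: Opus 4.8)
The plan is to obtain $\tilde p(y)$ by decomposing the network-assisted coverage event at $y$ into a ``D2D part'' and a ``cellular part'', applying Theorem~\ref{thm:1} and Corollary~\ref{cor:3} to the two parts respectively, and multiplying the complementary probabilities after establishing that the two parts are independent; the mean number $\mathbb{E}^o[\tilde N]$ then follows by integrating $\tilde p(y)$ against the first-order intensity of the receiver process, exactly as $\mathbb{E}^o[N]$ was obtained in Proposition~\ref{pro:2}.

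Concretely, write $A(y) = \cup_{n=1}^{\tau_m} E_n(y)$ for the event that the typical D2D transmitter $x_0$ covers $y$ within its $\tau_m$ attempts, so $\mathbb{P}^o(A(y)) = p(y)$ by Theorem~\ref{thm:1}, and recall $\mathbb{P}^o(E^{(c)}(y)) = p_c$ by Corollary~\ref{cor:3} (where, thanks to the approximation~(\ref{eq:approx}), $p_c$ does not depend on $y$). By definition $\tilde p(y) = \mathbb{P}^o(A(y) \cup E^{(c)}(y))$. The key step, and the one I would treat most carefully, is the independence of $A(y)$ and $E^{(c)}(y)$. The event $A(y)$ is a measurable function of the D2D transmitter process $\Phi_m$ together with the D2D fading variables $\{F_{y,x_j}(n)\}$, whereas $E^{(c)}(y)$ is a measurable function of the BS process $\Phi_b$ (which determines the serving BS $z_o$ and the downlink interferers) together with the downlink fading variables $\{F_{y,z_x}\}$; the noise enters both $\sinr_{y,x_0}(n)$ and $\sinr^{(c)}_{y,z_o}$ only through the deterministic constants $\snr^{-1}$ and $\snr_c^{-1}$, so it introduces no dependence. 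Since $\Phi_b$ is independent of $\Phi_m$, so that Palm conditioning on a point of $\Phi_m$ at the origin leaves the law of $\Phi_b$ unchanged, and all fading variables are mutually independent across links and time slots, the $\sigma$-algebras generating $A(y)$ and $E^{(c)}(y)$ are independent; this is the probabilistic counterpart of the overlay assumption, under which the D2D and cellular signals see disjoint sources of interference. Hence $\mathbb{P}^o(A(y) \cup E^{(c)}(y)) = 1 - \mathbb{P}^o(A(y)^c)\,\mathbb{P}^o(E^{(c)}(y)^c) = 1 - (1-p(y))(1-p_c)$, and expanding gives $\tilde p(y) = p(y) + p_c(1-p(y))$.

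For the second assertion, observe that whether a receiver $y \in \Phi_{m,o}$ is covered (with network assistance) is a function of $y$, $\Phi_m$, $\Phi_b$ and the fading only, and in particular does not involve the remaining points of $\Phi_{m,o}$, since receivers neither transmit nor interfere. Therefore, conditioning on $(\Phi_m,\Phi_b,\text{fading})$ and applying Campbell's averaging formula to the receiver process $\Phi_{m,o}$, whose first-order intensity measure under $\mathbb{P}^o$ is $\lambda_r \nu(\cdot \cap B(o,R))$, followed by Fubini, gives
\begin{align}
\mathbb{E}^o[\tilde N] = \int_{B(o,R)} \tilde p(y)\, \lambda_r \, \d y = \int_{B(o,R)} \big(p(y) + p_c(1-p(y))\big)\, \lambda_r \, \d y . \notag
\end{align}
Splitting the integral and using $\int_{B(o,R)} \lambda_r \, \d y = \lambda_r \pi R^2$ together with the identity $\mathbb{E}^o[N] = \int_{B(o,R)} p(y)\,\lambda_r \, \d y$ that underlies Proposition~\ref{pro:2} yields $\mathbb{E}^o[\tilde N] = \mathbb{E}^o[N] + p_c(\lambda_r \pi R^2 - \mathbb{E}^o[N])$. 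The only genuine obstacle is the independence claim; once the relevant random objects are identified it reduces to the independence of $\Phi_m$, $\Phi_b$ and the fading families, so the argument is short. It is worth stressing that the $y$-independence of $p_c$ granted by~(\ref{eq:approx}) is precisely what lets the $p_c$ term pull out of the integral; without that approximation one would carry a location-dependent $p_c(y)$ inside and the expression would not collapse to the stated form.
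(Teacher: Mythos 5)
Your proposal is correct and follows essentially the same route as the paper: the paper likewise asserts independence of the D2D coverage event and the BS coverage event (justified by the independence of $\Phi_m$, $\Phi_b$ and the fading under the overlay assumption, with $p_c$ made location-free by the approximation~(\ref{eq:approx})), writes $\tilde{p}(y) = 1-(1-p_c)(1-p(y))$, and obtains $\mathbb{E}^o[\tilde N]$ by integrating $\tilde p(y)$ against the receiver intensity exactly as in the proof of Proposition~\ref{pro:2}. You simply spell out the independence and Campbell/Fubini steps in more detail than the paper does.
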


Prop. \ref{pro:5} shows that the network assistance is most useful when $\lambda_r \pi R^2 - \b E^o [ N ] \geq 0$ is large. In particular, with moderate to large detection threshold $T$ and cluster range $R$, network assistance can significantly reduce the number $\tau_m$ of transmissions to achieve the same mean number of covered receivers in the absence of network assistance. Fig. \ref{fig:5} shows  the mean number of covered receivers with network assistance as a function of multicast times. As expected, network assistance is very useful; the gain is particularly pronounced in the case of large  multicast radius. In addition, Fig. \ref{fig:5}  shows that the analytical results match the empirical results fairly well; in particular, the approximation (\ref{eq:approx}) used in the case of network assistance analysis does not lead to noticeable accuracy lost, at least from the perspective of mean number of covered receivers.

\begin{figure}
\centering
\includegraphics[width=8cm]{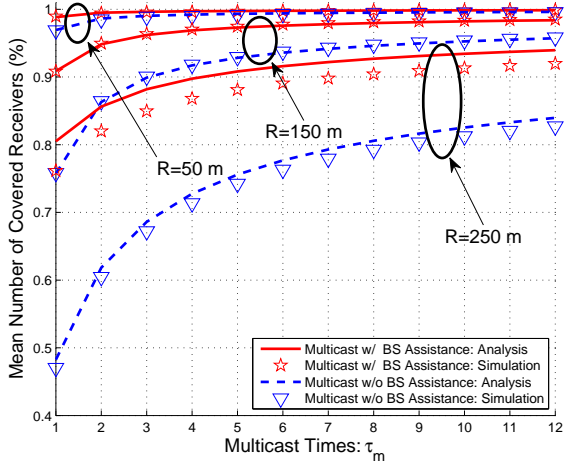}
\caption{Network assistance helps increase the mean number of covered receivers.}
\label{fig:5}
\end{figure}

Further, conditioned on $\| z_o \| = r$,  $\mathbb{E}^o [ \tilde{N} | \| z_0 \| = r ]$ equals
\begin{align}
\mathbb{E}^o [ N ] + e^{ - T \cdot \snr_c^{-1} A r^{\alpha}  - 2\pi \lambda_b H(T,\alpha) r^2 } ( \lambda_r \pi R^2 - \mathbb{E}^o [ N ]  ), \notag 
\end{align}
from which it is clear that the network assistance is most useful when the distance from the multicast transmitter to its nearest BS is not large. How to optimize this network assistance is the subject of the next section.

\section{Optimizing Multicast Transmissions}

In this section we aim to optimize the network assisted multicast transmissions. The overall objective is to seek for optimum network assistance rule to minimize retransmission times while certain network constraints can be satisfied.

For each BS $z$, let $g_{z}: \b R^{+} \to \{0,1\}$ be a mapping such that $g_z( \| x - z \| )=1$ if BS $z$ helps D2D transmitter $x$ located in its cell, i.e.,  $x \in \mathcal{C}_z (\Phi_b)$. As the transmission resources of BSs are limited, we assume each BS $z$ can help at most $B$ multicast sessions in its cell. Mathematically, for $\forall z \in \Phi_b$,
\begin{align}
\sum_{x \in \Phi_m \cap  \mathcal{C}_z (\Phi_b)} g_z ( \| x-z \| ) \leq B,.
\label{eq:10}
\end{align}
From the spatial average perspective, the following constraint is imposed at the typical cell.
\begin{align}
\mathbb{E}^o \left[ \sum_{x \in \Phi_m \cap  \mathcal{C}_o (\Phi_b)} g_o ( \| x \| ) \right] \leq B.
\label{eq:20}
\end{align}
In this section the Palm probability is defined with respect to the BS point process $\Phi_b$ instead of D2D transmitter point process $\Phi_m$; the two Palm distributions may be connected with Neveu exchange formula \cite{baccelli2009stochastic}. By definition, $g_o (\cdot) \in \{0,1\}$. However, under the Palm measure, the performance seen by the typical BS is a spatial average; thus with a slight abuse of notation, we allow $g_o (\cdot) \in [0,1]$. In the sequel, we shall refer to (\ref{eq:10}) (resp. (\ref{eq:20})) as \textit{resource} constraint.

Further, we require that a certain fraction $\eta$ of the intended receivers associated with the D2D transmitters in each cell should be covered. Mathematically, using Corollary \ref{cor:3} and Prop. \ref{pro:5}, we have the following constraint: for $\forall z \in \Phi_b$,
\begin{align}
\frac{\sum_{x \in \Phi_m \cap \mathcal{C}_z (\Phi_b)}  \mathbb{E} [ \sum_{y \in \Phi_{m,x}}  \mathbb{I}( \{y \textrm{ is covered}\}) ] }{ \sum_{x \in \Phi_m \cap \mathcal{C}_z (\Phi_b)}  \mathbb{E}  \left[ | \Phi_{m,x}  |\right] } \geq  \eta,
\label{eq:11}
\end{align}
where the numerator implicitly depends on $g_z (\cdot)$. From the spatial average perspective, the following constraint is required for the typical cell with the BS located at the origin. 
\begin{align}
\frac{ \mathbb{E}^o \left[  \sum_{x\in \Phi_m \cap \mathcal{C}_o (\Phi_b)}  \sum_{y \in \Phi_{m,x}}  \mathbb{I}( \{y \textrm{ is covered}\})  \right] }{  \mathbb{E}^o \left [\sum_{x \in \Phi_m \cap \mathcal{C}_z (\Phi_b)} | \Phi_{m,x}  | \right] }  \geq  \eta.
\label{eq:21}
\end{align}
In the sequel, we shall refer to (\ref{eq:11}) (resp. (\ref{eq:21})) as \textit{reliability} constraint. The following Prop. \ref{pro:6} gives more explicit expressions for the expectation terms involved in (\ref{eq:20}) and (\ref{eq:21}).

\begin{pro}
The three expectation terms in (\ref{eq:20}) and (\ref{eq:21}) are respectively given as follows.
\begin{align}
&\mathbb{E}^o  [ \sum_{x \in \Phi_m \cap   \mathcal{C}_0 (\Phi_b)} g_o ( \| x \| ) ]  = \frac{\lambda_m}{\lambda_b} \mathbb{E}_D \left[ g_o (D)  \right] \notag \\
&\mathbb{E}^o   [\sum_{x \in \Phi_m \cap \mathcal{C}_z (\Phi_b)}  | \Phi_{m,x}  | ] = \frac{\lambda_m}{\lambda_b} \bar{N}_{\max} \notag \\
&\mathbb{E}^o [  \sum_{x\in \Phi_m \cap \mathcal{C}_o (\Phi_b)}  \sum_{y \in \Phi_{m,x}}  \mathbb{I}( \{y \textrm{ is covered}\})  ]   \notag \\
&\quad  \quad \quad \quad \quad =  \frac{\lambda_m}{\lambda_b}  \mathbb{E}_D \left[  h( D ; \tau_m, g_o (D) ) \right] , \notag 
\end{align}
where $D$ is a Rayleigh distributed random variable with pdf
$
f_D (r) = 2\pi \lambda_b r e^{ -\lambda_b \pi r^2 }, r \geq 0
$; $\bar{N}_{\max} = \lambda_r \pi R^2$; and $h: \mathbb{R}^+ \mapsto \mathbb{R}^+$ is given by
$
h(r ; \tau_m, g_o (r) ) = \bar{N} (\tau_m) + g_o ( r ) \cdot q ( r ) ( \bar{N}_{\max} -  \bar{N} (\tau_m)), 
$
where $\bar{N} (\tau_m) =  \mathbb{E}^o[N (\tau_m) ]$ is given in Prop. \ref{pro:2}, $q: \mathbb{R}^+ \mapsto [0,1]$ is defined as
$
q(r) = e^{ - T \snr_c^{-1} A r^{\alpha}  - 2\pi \lambda_b H(T,\alpha) r^2 }.
$
\label{pro:6}
\end{pro}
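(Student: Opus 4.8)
The plan is to evaluate all three expectations by Palm calculus with respect to the BS process $\Phi_b$, repeatedly using Slivnyak's theorem (under $\mathbb{P}^o$ we add a BS at the origin, and the typical cell is $\mathcal{C}_o(\Phi_b)$), Campbell's formula for the independent PPP $\Phi_m$ and its receiver marks $\{\Phi_{m,x}\}$, and the elementary fact that for a fixed location $x$ one has $x\in\mathcal{C}_o(\Phi_b)$ if and only if $\Phi_b$ places no point in $B(x,\|x\|)$, an event of probability $e^{-\lambda_b\pi\|x\|^2}$ for a PPP of intensity $\lambda_b$.

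For the first term I would condition on $\Phi_b$, use Campbell's theorem for $\Phi_m$ to replace the sum by $\lambda_m\int_{\mathcal{C}_o(\Phi_b)}g_o(\|x\|)\,\dint x$, then de-condition on $\Phi_b$ with the void-probability identity above to get $\lambda_m\int_{\mathbb{R}^2}g_o(\|x\|)e^{-\lambda_b\pi\|x\|^2}\,\dint x$; passing to polar coordinates and factoring out $1/\lambda_b$ turns the integrand into the Rayleigh density $f_D$, which gives $\frac{\lambda_m}{\lambda_b}\mathbb{E}_D[g_o(D)]$. The second term is the same computation with one extra step: taking the expectation over the independent marks first, each $|\Phi_{m,x}|$ contributes its mean $\Lambda_x(\mathbb{R}^2)=\lambda_r\pi R^2=\bar N_{\max}$, so the quantity is $\bar N_{\max}\,\mathbb{E}^o[\Phi_m(\mathcal{C}_o(\Phi_b))]=\bar N_{\max}\,\lambda_m\,\mathbb{E}^o[\nu(\mathcal{C}_o(\Phi_b))]$, and since the typical Voronoi cell of a PPP of intensity $\lambda_b$ has mean area $1/\lambda_b$ (equivalently $\int_{\mathbb{R}^2}e^{-\lambda_b\pi\|x\|^2}\,\dint x=1/\lambda_b$) this equals $\frac{\lambda_m}{\lambda_b}\bar N_{\max}$.

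The third term is the substantive one. Take a transmitter $x\in\Phi_m\cap\mathcal{C}_o(\Phi_b)$; then its serving BS is the point at the origin, at distance $\|x\|$, and conditioning on $x\in\mathcal{C}_o(\Phi_b)$ forces the remaining (interfering) BSs onto $B^c(0,\|x\|)$ -- which, under approximation (\ref{eq:approx}), is precisely the configuration behind Lemma \ref{lem:4} and Corollary \ref{cor:3}, so the BS covers any receiver of $x$ with probability $q(\|x\|)$, the same for every receiver. Coverage of a receiver $y$ by its own D2D transmitter (Theorem \ref{thm:1}, hence probability $p(\|y-x\|)$) and coverage by the BS are independent, since the two links use orthogonal resources and independent fading; accounting for whether the BS actually relays (the factor $g_o(\|x\|)$, with the $[0,1]$-relaxation read as a spatial average), the probability that $y$ is covered is $p(\|y-x\|)+g_o(\|x\|)\,q(\|x\|)\,(1-p(\|y-x\|))$. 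Summing over $y\in\Phi_{m,x}$ with Campbell's formula for the mark process and invoking Proposition \ref{pro:2} -- which says exactly that $\int_{B(o,R)}p(\|v\|)\,\lambda_r\,\dint v=\bar N(\tau_m)$, while $\int_{B(o,R)}\lambda_r\,\dint v=\bar N_{\max}$ -- gives the per-transmitter conditional mean $h(\|x\|;\tau_m,g_o(\|x\|))$. Finally I would sum over $x\in\Phi_m\cap\mathcal{C}_o(\Phi_b)$, here via the reduced Campbell--Mecke (Slivnyak) formula for $\Phi_m$, which is what legitimizes using the typical-transmitter statistics $p(\cdot)$ above, and repeat the computation of the first term with $h(\|x\|;\tau_m,g_o(\|x\|))$ in place of $g_o(\|x\|)$, obtaining $\frac{\lambda_m}{\lambda_b}\mathbb{E}_D[h(D;\tau_m,g_o(D))]$.

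I expect the only real obstacle to be bookkeeping the layered randomness in the correct order -- first $\Phi_b$, then $\Phi_m$, then the receiver marks, then the fading -- and, within that, verifying that conditioning on ``the origin is the nearest BS of $x$'' is consistent with, indeed coincides with, the interference field already used to define $q(\cdot)$ in Corollary \ref{cor:3}. Once that is settled, all three identities collapse to the same one-dimensional integral of a bounded function against $2\pi\lambda_b r\,e^{-\lambda_b\pi r^2}$, i.e. against the density $f_D$.
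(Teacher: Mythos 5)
Your proposal is correct and follows essentially the same route as the paper: reduce each sum to a per-transmitter conditional mean (the third one being exactly $h(\|x\|;\tau_m,g_o(\|x\|))$ via Corollary \ref{cor:3}, Prop. \ref{pro:5} and Prop. \ref{pro:2}), then apply Campbell's theorem for $\Phi_m$, Fubini, the void-probability identity $\mathbb{P}^o(x\in\mathcal{C}_o(\Phi_b))=e^{-\lambda_b\pi\|x\|^2}$, and polar coordinates to recognize the Rayleigh density $f_D$. The paper only writes out the third (hardest) term and omits the other two "by similar arguments," which you supply explicitly, and your remark about the Campbell--Mecke/Slivnyak step for $\Phi_m$ and the consistency of the conditioning with approximation (\ref{eq:approx}) matches the paper's implicit reasoning.
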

\begin{proof}
See Appendix \ref{proof:pro:6}.
\end{proof}

Using Prop. \ref{pro:6}, we can cast the spatial averaged multicast optimization problem as follows.
\begin{align}
\textrm{minimize} \ \ & \tau_m  \label{eq:24}  \\
\textrm{subject to} \ \  & \mathbb{E}_D \left[ g_o (D)  \right] \leq \frac{\lambda_b}{\lambda_m} B  \notag \\ 
& \mathbb{E}_D \left[  h( D ; \tau_m, g_o (D) ) \right] \geq \eta \bar{N}_{\max}  \notag \\  
                         & 0\leq g_o (r) \leq 1, \forall r \geq 0.  \notag
\end{align}
This is a mixed integer nonlinear programming which is in general notoriously hard to solve. Worse still, the design space $g_o (\cdot)$ is of infinite dimension; it is not \textit{a priori} clear at all what kind of mapping $g_o (\cdot)$ we ought to pursue. Furthermore, as $g_o (\cdot)$ represents the optimum network assistance rule averaged across the space, it does not lead to readily implementable solution for the network. For these reasons, we are more interested in the following ``online'' problem: Given a realization of $\Phi_b$ and $\Phi_m$, how should each BS $z$ help the D2D transmitters in its cell while satisfying its resource and reliability constraints? Mathematically, each BS $z$ aims to solve the following problem.
\begin{align}
\textrm{minimize } & \ \  \tau_m  \label{eq:25}  \\
\textrm{subject to}   & \sum_{x \in \Phi_m \cap \mathcal{C}_z (\Phi_b)}  g_z ( \| x - z \| ) \leq B  \notag \\  
& \frac{\sum_{x \in \Phi_m \cap \mathcal{C}_z (\Phi_b)}  h\left(\| x - z \| ; \tau_m, g_z (\|x-z\|) \right) }{ | \Phi_m \cap  \mathcal{C}_z (\Phi_b) | \cdot \bar{N}_{\max} }  \geq   \eta   \notag \\ 
                         & g_z ( \| x - z \|  ) \in \{0, 1\}, \forall x \in \Phi_m \cap  \mathcal{C}_z (\Phi_b).  \notag
\end{align}

Though the above problem is still an integer programming, the design space $g_z (\cdot)$ is of finite dimension. In particular, we only need to determine finite number of binary variables, $g_z ( \| x_i - z \|  ), i=1,...,M_z$, where $M_z = | \Phi_m \cap  \mathcal{C}_z (\Phi_b) |$. However,  with an exhaustive search the complexity is still exponential in $M_z$. We next analyze the optimality structure of the problem to design an efficient algorithm. To this end, we first note that there always exists a feasible solution; for example, the solution with $g_z ( \| x - z \|  ) \equiv 0$ but large enough $\tau_m$ is a feasible one.

For ease of exposition, relabel the D2D transmitters (located in the cell of BS $z$) in the order of increasing distance to BS $z$, i.e., $ r_1 \leq ... \leq  r_{M_z} $ where $r_i = \| x_i - z \|$, and let $\tau_m^{\star}$ denote the minimum value that can be obtained in the above problem. Then the following result holds.
\begin{pro}
There exists an optimal solution such that $g^{\star}_z ( \| x_1 - z \|  ) \geq .... \geq  g^{\star}_z ( \| x_{M_z} - z \|  )$.
\label{pro:7}
\end{pro}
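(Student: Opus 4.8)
The plan is to use an exchange argument. Suppose an optimal solution $g_z^\star$ is given, attaining the minimum $\tau_m^\star$, but there exist indices $i < j$ (so $r_i \le r_j$) with $g_z^\star(r_i) = 0$ and $g_z^\star(r_j) = 1$; i.e. the BS helps a farther D2D transmitter while not helping a closer one. I would construct a new solution $\hat g_z$ that agrees with $g_z^\star$ everywhere except that $\hat g_z(r_i) = 1$ and $\hat g_z(r_j) = 0$, and show $\hat g_z$ is still feasible (with the same $\tau_m^\star$), hence also optimal. Iterating this swap a finite number of times sorts the indicator values into non-increasing order of distance, which is exactly the claimed structure.

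The key step is monotonicity of the per-transmitter coverage contribution $h(r;\tau_m,g_z(r))$ in the swap. By Prop.~\ref{pro:6}, $h(r;\tau_m,g) = \bar N(\tau_m) + g\cdot q(r)\,(\bar N_{\max} - \bar N(\tau_m))$, so the change in the numerator of the reliability constraint caused by the swap is
\begin{align}
\Delta = \big(q(r_i) - q(r_j)\big)\,\big(\bar N_{\max} - \bar N(\tau_m^\star)\big). \notag
\end{align}
Now $q(r) = e^{-T\,\snr_c^{-1} A r^\alpha - 2\pi\lambda_b H(T,\alpha) r^2}$ is strictly decreasing in $r$, so $q(r_i) \ge q(r_j)$ since $r_i \le r_j$; and $\bar N_{\max} - \bar N(\tau_m^\star) \ge 0$ because $\bar N(\tau_m^\star) = \mathbb{E}^o[N]$ is the mean number of covered receivers in a cluster with $\bar N_{\max} = \lambda_r\pi R^2$ potential receivers, so it cannot exceed $\bar N_{\max}$. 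Hence $\Delta \ge 0$: the swap does not decrease the coverage numerator, so the reliability constraint still holds. The resource constraint is trivially preserved since $\sum_x g_z(\|x-z\|)$ is unchanged by the swap, and the integrality constraint is obviously preserved. Therefore $\hat g_z$ is feasible with objective value $\tau_m^\star$, so it is optimal.

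To finish, I would note that each swap strictly decreases the number of ``inversions'' — pairs $(i,j)$ with $i<j$, $g_z(r_i)=0$, $g_z(r_j)=1$ — which is a nonnegative integer bounded by $\binom{M_z}{2}$, so after finitely many swaps we reach an optimal solution $g_z^\star$ with $g_z^\star(r_1) \ge \cdots \ge g_z^\star(r_{M_z})$. The main (and really the only) obstacle is making sure the two ingredients used above are clean: that $q(\cdot)$ is genuinely monotone decreasing (immediate from the explicit exponential form, using $\alpha>2$ and $H(T,\alpha)>0$) and that $\bar N(\tau_m^\star) \le \bar N_{\max}$ always holds (immediate since $N \le |\Phi_{m,x_0}|$ pointwise and $\mathbb{E}^o[|\Phi_{m,x_0}|] = \lambda_r\pi R^2$). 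Both are straightforward, so the exchange argument goes through without complication. One small point worth stating explicitly in the write-up: if multiple transmitters are equidistant from $z$, ties in $r_i$ can be broken arbitrarily and the argument is unaffected.
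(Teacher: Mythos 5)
Your proposal is correct and follows essentially the same exchange (swap) argument as the paper's proof: identify an inverted pair, swap the binary assignments, use that $q(\cdot)$ is decreasing and $\bar N_{\max}-\bar N(\tau_m)\geq 0$ to preserve the reliability constraint while the resource and integrality constraints are unchanged, and iterate finitely many times. Your added remarks on tie-breaking and counting inversions are minor refinements of the same argument.
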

\begin{proof}
See Appendix \ref{proof:pro:7}.
\end{proof}

For each possible $\tau_m$, Prop. \ref{pro:7} implies that BS $z$ can focus on the $\min(M_z, B)$ nearest D2D transmitters and assists as few of them as possible to save resources. Further, as the mapping $h(\cdot)$ is moronically increasing with $\tau_m$, we then can use a binary search for the minimum $\tau_m^\star$ over $\{1,2,...,\tau_{\max}\}$, where $\tau_{\max}$ is a large enough integer such that $\frac{1}{M_z} \sum_{i=1}^{M_z} h\left( r_i ;
 \tau_{\max}, 0 \right) \geq \eta \bar{N}_{\max} $. 
The running time of this algorithm  is $O (  M_z \log \tau_{\max} )$; thus for given $M_z$ and $\tau_{\max}$,  the proposed algorithm is efficient. However, we need to find a valid but $\textit{a priori}$ unknown  $\tau_{\max}$ for initialization purpose. With reasonable $\eta$, $\tau_{\max}$'s are usually not large and we can find one quite efficiently. We simulate the proposed algorithm and present the network assistance statistics in Fig. \ref{fig:6}. 
As expected, D2D transmitters that are closer to their nearest BSs have higher chance to get network assistance. 

\begin{figure}
\centering
\includegraphics[width=8cm]{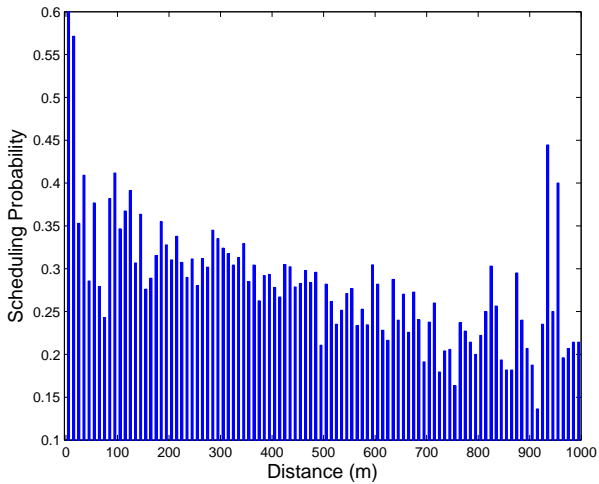}
\caption{Network assistance statistics in the case of optimized multicast transmissions: $\eta = 95\%, B = 2$. The abscissa denotes the distance between D2D transmitter and its nearest BS; given the distance, the associated ordinate value gives the corresponding probability that the D2D transmitter is scheduled by the BS for assistance.}
\label{fig:6}
\end{figure}


Finally, we comment how to construct a reasonably good solution to  the original prohibitively difficult network-wise optimization problem (\ref{eq:24}) by collecting and appropriately averaging the network assistance statistics $\{g_z (r)\}$ across the space as follows. We first simulate a large enough network with area e.g. $B(0, R_n)$ where each BS assists the D2D transmitters in its cell using the proposed algorithm. Then, partition $\b R^+$ into $I$ non-overlapping intervals $[r_i,r_{i+1}), i=0,1,...,I-1$ with $r_0=0, r_I = \infty$, and $|r_{i+1} - r_i | = \Delta$ for $i=0,1,...,I-2$; and collect the statistics as follows: for $\forall r\in [r_i, r_{i+1})$, $\bar{g}_0 (r)$ equals 
\begin{align}
&\frac{\sum_{z \in B(0, R_n)}  \sum_{ x \in \Phi_{m} \cap \mathcal{C}_z (\Phi_b) } 1_{ \| x - z \| \in [r_i,r_{i+1}) } \cdot g_z ( \| x - z \| )}{ \sum_{z \in B(0, R_n)}  \sum_{ x \in \Phi_{m} \cap \mathcal{C}_z (\Phi_b) } 1_{ \| x - z \| \in [r_i,r_{i+1}) } }. \notag 
\end{align} 
In this way, we arrive at a piece-wise constant solution $\bar{g}_0 (\cdot) $ to the problem (\ref{eq:24}). With  $\bar{g}_0 (\cdot) $, we can use binary search for the corresponding minimum objective value $\bar{\tau}_m$, which is expected to be approximately equal to the spatial average obtained from simulation, i.e.,
 \begin{align}
\bar{\tau}_m   \approx \frac{1}{ | \Phi_b ( B(0, R_n) )   | } \sum_{z \in B(0, R_n)} \tau_m^{\star} ( g_z   ) . \notag 
\end{align} 
The above approach can be viewed as simulated annealing which leverages the ergodicity of the underlying random processes.

\section{Conclusions}

In this paper we propose a tractable analytical model for the analysis and design of multicast D2D. The model has been used to analyze important multicast metrics like the coverage probability, mean number of covered receivers and throughput. In addition, how the multicast performance would be affected by certain factors like mobility and infrastructure assistance has also been examined. We have also explored how to optimize multicasting, e.g. by choosing the optimal multicast rate and the optimal number of retransmission times.

This work can be  extended in a number of ways. One may consider more sophisticated network-assisted D2D; for example, BSs may schedule multicast D2D UEs to achieve better interference coordination among D2D transmitters. It is also interesting to extend this work to heterogeneous networks consisting of different types of lower power nodes besides macro BSs, possibly with multiple antenna techniques employed. Last but not least, millimeter wave (mmWave) seems quite promising for D2D if there is Tx-Rx proximity but so far the performance of mmWave D2D is largely open.

\section*{Acknowledgment}

The authors gratefully acknowledge Professor Fran\c{c}ois Baccelli for his help in identifying the problem and detailed technical discussions.

\appendix

\subsection{Proof of Theorem \ref{thm:1}}
\label{proof:thm:1}

The proof consists of two steps: We first perform time average over fading by fixing the spatial realization of $\Phi_m$; then we de-condition on $\Phi_m$ to average out the spatial randomness. This two-step argument used to deal with the temporal correlation of multicast process is motivated by \cite{haenggi2012diversity} which deals with spatial correlation over multiple receive antennas. 

Let $E_I (y) = \bigcap_{n \in I} E_n,  I \subset \{1,...,\tau_m\}$, where we drop the argument of $E_n(y)$ for notional simplicity. Then conditioned on $\Phi_m$, the probability that $y \in B(0, R)$ is covered by the typical multicast transmitter is given by
\begin{align}
p (y | \Phi_m  ) &= \mathbb{P}^o \left( \bigcup_{n=1,...,\tau_m} E_n | \Phi_m  \right) \notag \\
 &=  \sum_{n=1}^{\tau_m} (-1)^{n+1} \sum_{I \subset \{1,...,\tau_m\}: |I| = n} \mathbb{P}^o \left( E_I | \Phi_m  \right), \notag 
\end{align}
where the second equality follows from inclusion-exclusion principle.
Note that conditioned on $\Phi_m$, the events $E_n$, $n=1,...,\tau_m$,  are independent, because the fading fields are assumed to be independent across both space and time. It follows that $\mathbb{P}^o \left( E_I | \Phi_m  \right)$ only depends on the cardinality of $I$, i.e., $\mathbb{P}^o \left( E_I | \Phi_m  \right) \equiv \mathbb{P}^o \left( E_{\{1,...,n\}} | \Phi_m  \right), \forall I \subset \{1,...,\tau_m\}$ with  $|I| = n$. Thus, denoting $p_n (y | \Phi_m  ) = \mathbb{P}^o \left( E_{\{1,...,n\}} | \Phi_m  \right)$, $p (y | \Phi_m  )$ can be further written as
\begin{align}
p (y | \Phi_m  ) &=\sum_{n=1}^{\tau_m} (-1)^{n+1} \binom{\tau_m}{n} p_n (y | \Phi_m  ). \notag
\end{align}
Next we focus on computing $p_n (y | \Phi_m  )$. Due to the independence of the fading fields across time, 
\begin{align}
p_n (y | \Phi_m  ) &= \mathbb{P}^o (  \sinr_{y,0} (k) \geq T, \forall k \in \{1,...,n\}  | \Phi_m  ) \notag \\
&= \prod_{k=1}^{n} \mathbb{P}^o (  \sinr_{y,0} (k) \geq T   | \Phi_m  ), \notag 
\end{align}
where $\mathbb{P}^o (  \sinr_{y,0} (k) \geq T   | \Phi_m  )$ equals
\begin{align}
&\mathbb{P}^o \left(  \frac{  F_{y,x_0} (k) / \ell ( \|  y  \| ) }{ \snr^{-1}  +\sum_{j \neq i} F_{y,x_j} (k) / \ell ( \| x_j -  y   \| ) } \geq T   \bigg | \Phi_m  \right ) \notag \\
&= \mathbb{P}^o (   F_{y,x_0} (k)   \geq \ell ( \|  y  \| ) T (\snr^{-1}   \notag \\
& \quad +   \sum_{j \neq 0} F_{y,x_j} (k) / \ell ( \| x_j - y   \| ) )  | \Phi_m  ) \notag \\
&= e^{- \ell ( \|  y  \| ) T \cdot \snr^{-1} } \mathbb{E}^o_{F} [ e^{ -  \ell ( \|  y  \| )  T \sum_{j \neq 0} F_{y,x_j} (k) / \ell ( \| x_j - y   \| ) }   | \Phi_m ] \notag  
\end{align}
where the last equality is due to $F_{y,x_0} (k) \sim \textrm{Exp}(1)$. Further, by Slyvnyak's theorem \cite{baccelli2009stochastic}, independence of the fading fields across space and the Laplace transform of $F\sim \textrm{Exp}(1)$ (which is $\cc L_{F} (s) = \frac{1}{1+s}$), $p_n (y | \Phi_m  ) $ equals $e^{- n\ell ( \|  y  \| ) T \cdot \snr^{-1} }$ multiplied by
\begin{align}
&\prod_{k=1}^{n} \mathbb{E}_{F} \left[ e^{ - n \ell ( \|  y  \| )  T \sum_{j} F_{y,x_j} (k) / \ell ( \| x_j - y   \| ) }   | \Phi_m  \right ] \notag \\
&=  \prod_j \prod_{k=1}^{n}  \mathbb{E}_{F} \left[ e^{ -  \ell ( \|  y  \| ) T  F_{y,x_j} (k) / \ell ( \| x_j - y   \| ) }   | \Phi_m  \right ] \notag \\
&=  \prod_j    \frac{1}{ (1 + \ell ( \|  y  \| )  T  / \ell ( \| x_j - y   \| ) )^n }. \notag 
\end{align}
Now de-conditioning with respect to $\Phi_m$ yields
\begin{align}
&p_n (y   ) = \mathbb{E}_{\Phi_m} [p_0^{(n)} (y | \Phi_m  )  ]  \notag \\
&= e^{- n \ell ( \|  y  \| ) T \cdot \snr^{-1} } \mathbb{E}_{\Phi_m} [\prod_j    \frac{1}{ (1 + \ell ( \|  y  \| )  T  / \ell ( \| x_j - y   \| ) )^n } ]  \notag \\
&= e^{- n \ell ( \|  y  \| ) T \cdot \snr^{-1} } \mathbb{E}_{\Phi_m} \left[ e^{\sum_j \log   \frac{1}{ (1 + \ell ( \|  y  \| )  T  / \ell ( \| x_j - y   \| ) )^n } } \right]  \notag \\
&= e^{- n \ell ( \|  y  \| ) T \cdot \snr^{-1} }  e^{ - \lambda_m \int_{\mathbb{R}^2} 1 -  \frac{1}{ (1 + \ell ( \|  y  \| )  T  / \ell ( \| x - y   \| ) )^n } \dint x }, \notag  
\end{align}
where the last equality follows from the Laplace functional of the PPP $\Phi_m: \cc L_{\Phi_m} (f) = \exp ( - \lambda_m \int_{\mathbb{R}^2} (1 - e^{-f(x)}) \dint x  ) $ where $f: \mathbb{R}^2 \to \mathbb{R}^+$ \cite{baccelli2009stochastic}. Further, by the staionarity of the PPP $\Phi_m$ and changing Cartesian coordinates to Polar coordinates, 
\begin{align}
p_n (y ) = e^{-n \ell ( \|  y  \| ) T \cdot \snr^{-1} }  e^{ - 2\pi \lambda_m \int_{0}^\infty ( 1 -   (1 + T\ell ( \|  y  \| )  / \ell ( r ) )^{-n}  ) r \dint r }.  \label{eq:app:1}
\end{align}
To sum up,
\begin{align}
p (y  ) &= \mathbb{E}[ p (y | \Phi_m  ) ] = \sum_{n=1}^{\tau_m} (-1)^{n+1} \binom{\tau_m}{n} \mathbb{E} E[ p_n (y | \Phi_m  ) ]  \notag \\
&= \sum_{n=1}^{\tau_m} (-1)^{n+1} \binom{\tau_m}{n} p_n (y ). \notag   
\end{align}
Plugging the explicit expression (\ref{eq:app:1}) for $p_n (y   )$ completes the proof.

\subsection{Proof of Proposition \ref{pro:1}}
\label{proof:pro:1}

We first evaluate $p_n (y_1, y_2)$  defined as follows:
$$p_n (y_1, y_2) \triangleq  \mathbb{P}^o (   (\cap_{m=1}^n E(y_1)) \cap (\cap_{m=1}^n E(y_2) )  ) .
$$
As in the proof of Theorem \ref{thm:1}, the calculation consists of two steps: first evaluate $p_n (y_1, y_2 | \Phi_m  )$ conditioned on $\Phi_m$, and then de-condition on $\Phi_m$ to obtain $p_n (y_1, y_2)$. 

Following similar arguments in the proof of Theorem \ref{thm:1}, the conditional $p_n (y_1, y_2 | \Phi_m  )$ can be calculated as
\begin{align}
p_n (y_1, y_2 | \Phi_m  ) &= e^{ -n (\ell ( \|  y_1  \| ) + \ell ( \|  y_2  \| )) T \cdot \snr^{-1} } \cdot   \notag \\
& \quad \prod_{i=1}^2\prod_j    \frac{1}{ (1 + \ell ( \|  y_i  \| )  T  / \ell ( \| x_j - y_i   \| ) )^n }. \notag 
\end{align}
Now de-conditioning on $\Phi_m$ yields
\begin{align}
&p_n (y_1, y_2   ) = \mathbb{E}_{\Phi_m} [p_n (y_1, y_2 | \Phi_m  )  ] \notag \\
&= e^{ -n (\ell ( \|  y_1  \| ) + \ell ( \|  y_2  \| )) T \cdot \snr^{-1} } \cdot   \notag \\
\quad & e^{ - \lambda_m \int_{\mathbb{R}^2} 1 - \frac{1}{ (1 + \ell ( \|  y_1  \| )  T  / \ell ( \| x - y_1   \| ) )^n } \frac{1}{ (1 + \ell ( \|  y_2  \| )  T  / \ell ( \| x - y_2   \| ) )^n } \dint x } . \notag 
\end{align}
Using $p_n (y_1, y_2)$, the conditional probability $p_n (y_1 | y_2)$ can be calculated as follows:
\begin{align}
& p_n (y_1 | y_2) 
= \frac{p_n (y_1 , y_2)}{ p_n (  y_2) }  
= e^{- n \ell ( \|  y_1  \| ) T \cdot \snr^{-1} } \cdot   \notag \\
& \quad e^{ - \lambda_m \int_{\b R^2} \frac{1}{ (1 + \ell ( \|  y_2  \| )  T  / \ell ( \| x - y_2   \| ) )^n } \left(1 -  \frac{1}{ (1 + \ell ( \|  y_1  \| )  T  / \ell ( \| x - y_1   \| ) )^n } \right) \dint x } . \notag 
\end{align}
This completes the proof.

\subsection{Proof of Proposition \ref{pro:2}}
\label{proof:pro:2}
Define the typical coverage cell $\cc A_{o}$ of the multicast transmitter $x_o \in \Phi_m$ as
\begin{align}
\cc A_{o} = \left\{ y \in B(x_i, R): \exists  n \textrm{ s.t. }\sinr_{y,o} (n) \geq T \right\}. \notag
\end{align}
We next establish the relation between $\mathbb{E}^o[N]$ and the mean cell volume $\mathbb{E}^o [ | \cc A_o | ]$:  $\mathbb{E}^o[ N ] = \lambda_r \mathbb{E}^o [ | \cc A_o | ]$. To this end, 
\begin{align}
\mathbb{E}^o[ N ] &=\mathbb{E}^o [ \sum_{y \in \Phi_{m,o}} \mathbb{I} (\{ y \textrm{ is covered} \}) ] \notag \\
&=  \mathbb{E}^o \left[ \lambda_r \int_{\mathbb{R}^2} \mathbb{I}(\{ y \textrm{ is covered} \}) \dint y \right] \notag \\
&=  \lambda_r \int_{\mathbb{R}^2}  \mathbb{E}^o \left[\mathbb{I}(\{ y \textrm{ is covered} \}) \right]  \dint y \notag \\
&=  \lambda_r \int_{\mathbb{R}^2}  p(y) \dint y, \notag
\end{align}
where the second and third equalities follow from Campbell's theorem \cite{baccelli2009stochastic} and Fubini's theorem, respectively. Similarly, we have
\begin{align}
\mathbb{E}^o [ | \cc A_o | ] &= \mathbb{E}^o \left[ \int_{\mathbb{R}^2} \mathbb{I}( \{ y \in \cc C_0 \} ) \dint y \right] 
 =  \int_{B(0,R)}   p_0 (y)  \dint y. \notag
\end{align}
It follows that  $\mathbb{E}^o[ N ] = \lambda_r \mathbb{E}^o [ | \cc A_o | ]$.
The proof will be complete once we compute $\mathbb{E}^o [ | \cc A_0 | ]$. To this end, using Corollary \ref{cor:1}, we obtain that $\mathbb{E}^o [ | \cc A_0 | ]$ equals
\begin{align}
&2\pi  \int_0^{R}  \sum_{n=1}^{\tau_m} (-1)^{n+1} \binom{\tau_m}{n} e^{- n  T \cdot \snr^{-1} A r^{\alpha}  - \lambda_m K(\alpha, n) T^{\frac{2}{\alpha}} r^2  }  r \textrm{d} r \notag 
\end{align}
By Fubini's theorem we can exchange the above summation and integration. Then invoking the established relation $\mathbb{E}^o[ N ] = \lambda_r \mathbb{E}^o [ | \cc A_o | ]$ completes the proof.

\subsection{Proof of Proposition \ref{pro:8}}
\label{proof:pro:8}

By definition, the fraction of null receiver clusters equals the probability that the typical cluster  is in bad geometry. Denoting this event by $E_{bad}$, we consider the following two cases.

If $R<R_{th}$, then $E_{bad}$ is equivalent to the event there exists no point in the typical cluster $\Phi_{m,x_0}$. By Poisson assumption, we have $\Phi_{m,x_0}( B(o, R) ) \sim \textrm{Poisson} ( \lambda_r \pi R^2 ) $. Then,
$$
\mathbb{P}^o (E_{bad})  =  \mathbb{P}^o ( \Phi_{m,x_0}( B(o, R) ) = 0  ) = e^{-\lambda_r \pi R^2 }.
$$

If $R\geq R_{th}$, denoting by $A_n = \{ \Phi_{m,x_0}( B(o, R) ) = n \}$, 
\begin{align}
&\mathbb{P}^o (E_{bad})  = \sum_{n=0}^{\infty} \mathbb{P}^o ( A_n  ) \mathbb{P}^o (E_{bad} | A_n ) \notag \\
&=  \sum_{n=0}^{\infty}  \frac{ (\lambda_r \pi R^2)^n e^{-\lambda_r \pi R^2} }{n!} \left( \frac{\lambda_r \pi R^2- \lambda_r \pi R_{th}^2 }{\lambda_r \pi R^2} \right)^n \notag \\
&= e^{-\lambda_r \pi R_{th}^2 }, \notag 
\end{align}
where the second equality follows from that conditioning on $\Phi_{m,x_0}( B(o, R) ) = n$ these $n$ points are i.i.d. uniformly distributed in $B(o, R)$. To sum up, we have $\mathbb{P}^o (E_{bad}) = e^{ -\lambda_r \pi (\max(R_{th}, R))^2 }$.

\subsection{Proof of Proposition \ref{pro:9}}
\label{proof:pro:9}

When $W \equiv 0$ and $\lambda_m \to \infty$, by Corollary \ref{cor:2} 
$$
\xi = \frac{\pi  \tilde{K} (\alpha, \tau_m)\lambda_r}{T^{\frac{2}{\alpha}} \lambda_m} \cdot \frac{1}{\tau_m} \log(1+T).
$$
It follows that $\textrm{maximize}_{T > 0}  \xi$ is equivalent to
$
\textrm{maximize}_{T > 0}  T^{-\frac{2}{\alpha}} \log(1+T). \notag 
$
Let $f(x) = x^{-\frac{2}{\alpha}} \log(1+x)$. Direct calculation yields 
$$
\frac{\textrm{d} f }{\textrm{d} x} = \frac{1}{x^{\frac{2}{\alpha}+1}} \left( \frac{x}{1+x} - \frac{2}{\alpha} \log(1+x) \right). 
$$
Denote by $g(x)$ the term inside the above parentheses. Direct calculation yields 
$
\frac{\textrm{d} g }{\textrm{d} x} = \frac{1}{1+x} \left( \frac{1}{1+x} - \frac{2}{\alpha}  \right) .
$
It follows that $\frac{\textrm{d} g }{\textrm{d} x} > 0$ when $x \in (0,\frac{\alpha}{2} -1)$ and $\frac{\textrm{d} g }{\textrm{d} x} < 0$ when $x \in (\frac{\alpha}{2} -1, \infty)$. Correspondingly, $g(x)$ first monotonically increases from $0$ to $g(\frac{\alpha}{2} -1) = 1-\frac{2}{\alpha} (1+ \log(\frac{\alpha}{2}) )$ (which is positive when $\alpha >2$), and then monotonically decreases from $g(\frac{\alpha}{2} -1)$ to $-\infty$. Thus, there exists a unique point $x^\star > \frac{\alpha}{2} -1$ such that $g(x^\star = 0)$, and $f(x)$ monotonically increases when $x\in (0,x^\star)$ and then decreases when $x\in (x^\star,\infty)$. The last fact implies that $x^\star$ is the unique optimal point and completes the proof.

\subsection{Proof of Proposition \ref{pro:3}}
\label{proof:pro:3}

The proof is similar to that of Theorem \ref{thm:1} except the following arguments:
\begin{align}
p_n (y  ) &= \mathbb{P}^o (  \sinr_{y,0} (k) \geq T, \forall k \in \{1,...,n\}  )  \notag \\
&= \prod_{k=1}^{n} \mathbb{P}^o (  \sinr_{y,0} (k) \geq T    ) \notag 
\end{align}
where the last equality follows from the fact that an independent PPP $\Phi_m(k)$ is drawn at each time slot, and $\mathbb{P}^o (  \sinr_{y,0} (k) \geq T    )$ equals
\begin{align}
&\mathbb{P}^o \left(  \frac{  F_{y,x_0} (k) / \ell ( \|  y  \| ) }{ \snr^{-1}  +\sum_{x_j \in \Phi_m(k): j \neq 0} F_{y,x_j} (k) / \ell ( \| x_j -  y   \| ) } \geq T     \right ) \notag \\
&= e^{- \ell ( \|  y  \| ) T \cdot \snr^{-1} } \cdot \notag \\   &\quad \mathbb{E}^o_{\Phi_m(k),F} \left[ e^{ -  \ell ( \|  y  \| )  T \sum_{j \in \Phi_m (k)} F_{y,x_j} (k) / \ell ( \| x_j - y   \| ) }    \right ] \notag  \\
&=e^{- n\ell ( \|  y  \| ) T \cdot \snr^{-1} }\prod_{k=1}^{n}e^{ - \lambda_m \int_{\mathbb{R}^2} 1 -  \frac{1}{ (1 + \ell ( \|  y  \| )  T  / \ell ( \| x - y   \| ) ) } \dint x } \notag \\
&= e^{-  n T \cdot \snr^{-1} A \|  y  \|^{\alpha} }  e^{ - \lambda_m nK(\alpha, 1) T^{\frac{2}{\alpha}} \| y \|^2  }, \notag
\end{align}
Then $p_0 (y  )$ can be readily obtained by plugging $p_n (y  )$ into the equality $p_0 (y  )= \sum_{n=1}^{\tau_m} (-1)^{n+1} \binom{\tau_m}{n} p_n (y   ) $. Also, the mean number of covered receivers can be evaluated using the equality $\mathbb{E}^o[ N ] =   \lambda_r \int_{B(0,R)} p_0 (y) \dint y$, which has been established in the proof of Prop. \ref{cor:2}.

\subsection{Proof of Lemma \ref{lem:1}}
\label{proof:lem:1}

By definition, we have
\begin{align}
 &n K(\alpha, 1) - K(\alpha, n) = n \frac{2\pi }{\alpha}\int_0^{\infty} t^{-\frac{2}{\alpha} - 1} \left( 1 - \frac{1}{1+t} \right) \dint t    \notag \\
 &\quad  -\frac{2\pi }{\alpha}\int_0^{\infty} t^{-\frac{2}{\alpha} - 1} \left( 1 - \frac{1}{(1+t)^n} \right) \dint t \notag \\
&=  \frac{2\pi }{\alpha}\int_0^{\infty} t^{-\frac{2}{\alpha} - 1} \left( n- 1 - \frac{n}{1+t}  + \frac{1}{(1+t)^n} \right) \dint t. \notag 
\end{align}
Denote by $f(t) = n- 1 - \frac{n}{1+t}  + \frac{1}{(1+t)^n}, t \geq 0$. Note $f'(t) = \frac{(1+t)^{n-1} - 1}{(1+t)^{n+1}} \geq 0, \forall t \geq 0$. It follows that $f(t)$ is monotonically increasing on $[0,\infty]$ and $f(t) \geq f(0) = 0$. Thus $n K(\alpha, 1) - K(\alpha, n) = \frac{2\pi }{\alpha}\int_0^{\infty} t^{-\frac{2}{\alpha} - 1} f(t) \dint t \geq 0$ as the integrand is non-negative.

%

\subsection{Proof of Proposition \ref{pro:6}}
\label{proof:pro:6}

Under the Palm measure (with respect to $\Phi_b$),
\begin{align}
&\mathbb{E}^o [  \sum_{x\in \Phi_m \cap \mathcal{C}_o (\Phi_b)}  \sum_{y \in \Phi_{m,x}}  \mathbb{I}( \{y \textrm{ is covered}\})  ]   \notag \\
&= \mathbb{E}^o [  \sum_{x\in \Phi_m \cap \mathcal{C}_o (\Phi_b)} \mathbb{E}[ \sum_{y \in \Phi_{m,x}}  \mathbb{I}( \{y \textrm{ is covered}\})  ] ] .
\label{eq:31}
\end{align}
For $x \in \mathcal{C}_o (\Phi_b) \cap \Phi_m $, using Corollary \ref{cor:3} and Prop. \ref{pro:5}, we have 
\begin{align}
\mathbb{E}[ \sum_{y \in \Phi_{m,x}}  \mathbb{I}( \{y \textrm{ is covered}\})  ] = h( \| x \| ; \tau_m, g_o (\|x\|) ). \notag 
\label{eq:30}
\end{align}
So (\ref{eq:31}) characterizes the mean additive characteristic with functional $h( \cdot )$ \cite{baccelli2009stochastic} and equals
\begin{align}
&\mathbb{E}^o [  \sum_{x\in \Phi_m \cap \mathcal{C}_o (\Phi_b)} h( \| x \| ; \tau_m, g_o (\| x \|) )  ]  \notag \\
&=\mathbb{E}^o \left[ \lambda_m \int_{\mathbb{R}^2} \mathbb{I} ( x \in \mathcal{C}_o (\Phi_b) ) \cdot h( \| x \| ; \tau_m, g_o (\| x \|) ) \dint x  \right] \notag \\
&=  \lambda_m \int_{\mathbb{R}^2} \mathbb{E}^o \left[ \mathbb{I} ( x \in \mathcal{C}_o (\Phi_b) ) \right] \cdot h( \| x \| ; \tau_m, g_o (\| x \|) ) \dint x  \notag \\
&=  \lambda_m \int_{\mathbb{R}^2} \mathbb{P}^o ( x \in \mathcal{C}_o (\Phi_b) ) \cdot h( \| x \| ; \tau_m, g_o (\| x \|) ) \dint x  \notag \\
&=  \lambda_m \int_{\mathbb{R}^2} \mathbb{P}^o (  \Phi_b ( B^0 (x, \|x\|) ) = 0 ) \cdot h( \| x \| ; \tau_m, g_o (\| x \|) ) \dint x  \notag \\
&=  \lambda_m \int_{\mathbb{R}^2}  e^{-\lambda_b \pi \|x\|^2} \cdot h( \| x \| ; \tau_m, g_o (\| x \|) ) \dint x  \notag \\
&= 2 \pi \lambda_m \int_{\mathbb{R}^+}  e^{-\lambda_b \pi r^2} \cdot h( r ; \tau_m, g_o (r) ) r \dint r  \notag \\
&= \frac{ \lambda_m }{ \lambda_b } \mathbb{E}_D \left[  h( D ; \tau_m, g_o (D) ) \right], \notag 
\end{align}
where we use Campbell theorem in the first equality and Fubini theorem in the second equality; the fourth equality follows since  $x \in \mathcal{C}_o (\Phi_b)$ if and only if $o$ is the nearest BS in $\Phi_b$, i.e., $\Phi_b ( B^0 (x, \|x\|) ) = 0$; the fifth equality follows from the fact $\Phi_b ( B^0 (x, \|x\|) ) \sim \textrm{Poisson}(\lambda_m \pi \|x\|^2)$; and we convert from Cartesian to polar coordinates in the penultimate equality. Using similar arguments, we can derive the other two expectation terms; we omit them for brevity.

\subsection{Proof of Proposition \ref{pro:7}}
\label{proof:pro:7}

Suppose $\mathcal{O} = (\tau_m^{\star}, \{ g^{\dagger}_z ( \| x_i - z \|  ) \} )$ is an optimal solution but does not satisfy $g^{\dagger}_z ( \| x_1 - z \|  ) \geq .... \geq  g^{\dagger}_z ( \| x_{M_z} - z \|  )$. Then $\mathcal{O}$ has at least one pair $(i,j)$ such that $1\leq i < j \leq M_z$ and
$
0 = g^{\dagger}_z ( \| x_i - z \|  ) <  g^{\dagger}_z ( \| x_j - z \|  ) = 1.
$
We will decrease the number of such pairs in $\mathcal{O}$ by swapping the values of the binary decision variables: $g^{\dagger}_z ( \| x_i - z \|  ) = 1$ and $g^{\dagger}_z ( \| x_j - z \|  ) = 0$. We denote by $\tilde{\mathcal{O}} = (\tau_m^{\star}, \{ \tilde{g}_z ( \| x_i - z \|  ) \} )$ the solution after the swapping. First, we claim $\tilde{\mathcal{O}}$ is feasible; indeed, $\tilde{g}_z (r_k) \in \{0,1\}$, $\sum_{k=1}^{M_z}  \tilde{g}_z (r_k) = \sum_{k=1}^{M_z}  g^{\dagger}_z ( r_k ) \leq B$, and using the fact that $q(r)$ is strictly decreasing with $r$,
\begin{align}
&\frac{1}{ M_z } \sum_{k=1}^{M_z}  h\left( r_k ; \tau^\star_m, \tilde{g}_z (r_k) \right) - \frac{1}{ M_z } \sum_{k=1}^{M_z}  h\left(r_k ; \tau^\star_m, g^{\dagger}_z (g_k) \right)  \notag \\
&= \frac{1}{ M_z } (q ( r_i ) - q(r_j) ) ( \bar{N}_{\max} -  \bar{N} (\tau_m)) > 0, \notag 
\end{align}
which shows that $\tilde{\mathcal{O}}$ meets all the constraints.
Further, $\tilde{\mathcal{O}}$ gives the optimal objective value $\tau_m^{\star}$; and thus $\tilde{\mathcal{O}}$ is also an optimal solution. Repeating iteratively the above exchange arguments, we can construct an optimal solution  such that $g^{\star}_z ( r_1  ) \geq .... \geq  g^{\star}_z ( r_z  )$. This completes the proof.

\bibliographystyle{IEEEtran}
\bibliography{IEEEabrv,Reference}

\end{document}